\renewcommand*{\@fnsymbol}[1]{\ensuremath{\ifcase#1\or 1\or 2\or
 3\or 4\or 5\or 6\or 7 \or 8 \else\@ctrerr\fi}}
\DeclareMathOperator*{\esssup}{\mathrm{ess\,sup}}
\newcommand{\eps}{\varepsilon}
\newenvironment{myproof}[1]{\begin{proof}[Proof of #1]}{\end{proof}}
\numberwithin{equation}{section}
\newtheorem{theorem}{Theorem}[section]
\newtheorem{corollary}[theorem]{Corollary}
\newtheorem{lemma}[theorem]{Lemma}
\newtheorem{proposition}[theorem]{Proposition}
\theoremstyle{definition}
\newtheorem{definition}[theorem]{Definition}
\theoremstyle{remark}
\newtheorem{remark}[theorem]{Remark}
\title{The statistical dynamics of a spatial logistic model
       and the related kinetic equation}
\author{Dmitri Finkelshtein\thanks{Department of Mathematics,
Swansea University, Singleton Park, Swansea SA2 8PP, U.K. ({\tt d.l.finkelshtein@swansea.ac.uk}).} \and Yuri Kondratiev\thanks{Fakult\"{a}t
f\"{u}r Mathematik, Universit\"{a}t Bielefeld, Postfach 110 131, 33501 Bielefeld,
Germany ({\tt kondrat@math.uni-bielefeld.de}).} \and Yuri Kozitsky\thanks{Instytut Matematyki, Uniwersytet Marii Curie-Sk{\l}odwskiej, 30-031 Lublin, Poland ({\tt jkozi@hektor.umcs.lublin.pl}).} \and Oleksandr
Kutoviy\thanks{Department of Mathematics, Massachusetts Institute of Technology,
77 Massachusetts Avenue E18-420, Cambridge, MA 02139, USA ({\tt
kutovyi@mit.edu}); Fakult\"{a}t f\"{u}r Mathematik, Universit\"{a}t
Bielefeld, Postfach 110 131, 33501 Bielefeld, Germany ({\tt
kutoviy@math.uni-bielefeld.de}).}}
\begin{document}

\maketitle

\begin{abstract}

There is studied an infinite system of
point entities in $\mathbb{R}^d$ which reproduce themselves and die, also due to competition. The system's states  are
probability measures on the space of configurations of entities. Their evolution
is described by means of a BBGKY-type equation for the corresponding correlation (moment)
functions. It is proved that: (a)  these functions evolve on a bounded time interval and  remain
sub-Poissonian due to the competition; (b) in the Vlasov
scaling limit they converge to the correlation functions of the time-dependent Poisson point
field  the density of which solves the
kinetic equation obtained  in the scaling limit from the equation for the correlation functions.
A number of properties of the solutions of the kinetic equation  are also established.

\textbf{Keywords:} Individual-based model;  birth-and-death process; random point field; Ovcyannikov's method.

\textbf{AMS Subject Classification:} 82C22, 92D25, 60J80.

\end{abstract}

\section{The Setup } \label{Sec1}
\subsection{Introduction}
In life sciences, one often  deals with large systems of interacting
entities distributed over a continuous habitat and evolving in time,
cf. \cite{Bellomo,Neuhauser}. Their collective behavior is
observed at a macro-scale, and thus the mathematical theories  traditionally describe this behavior  by means of
phenomenologically deduced nonlinear equations involving macroscopic characteristics like
density, mobility, etc, see, e.g., \cite{zhao}. However,
this kind of macroscopic phenomenology may often be insufficient as
it does not take into account individual behavior of the constituting
entities.
 Thus, mathematical
models and methods are needed for drawing
population-level conclusions from individual-level descriptions.
The present paper is aimed at contributing to the development of the
theory of this kind.
We continue studying the  model
introduced and discussed in \cite{BP1,BP3,DL,FKK-MFAT,Dima,FM,Mu}.
This model describes a population of entities (e.g., perennial
plants) distributed over $\mathbb{R}^d$, which reproduce themselves
and die, also due to competition. As was suggested already in \cite{BP1}, see also page 1311 in \cite{Neuhauser}, the
 mathematical context proper for studying such objects is provided by the
theory of random point fields on $\mathbb{R}^d$. In this setting,
populations are modeled as point configurations constituting the set
\begin{equation} \label{C1}
 \Gamma :=\{\gamma\subset\mathbb R^d :
 |\gamma\cap K|<\infty\text{ for any compact $K\subset\mathbb R^d$
 }\},
\end{equation}
where $|A|$ stands for the number of elements in $A$. This will be
the phase space of our model. Along with finite configurations it
contains also infinite ones, which allows for studying `bulk'
properties ignoring boundary and size effects.

In the Hamiltonian mechanics, the motion of $N$ physical particles
in $\mathbb{R}^d$ is described by a system of $2dN$ differential
equations. For $N \gg 1$ (Avogadro's number is $\simeq6\times
10^{23}$), the point-wise description gets  meaningless since no
observation could indicate at which point of the phase space the
system actually is. Moreover, the description in terms of individual
trajectories would be `too detailed' to yield understanding the
collective behavior of the system. It was realized already in the
time of A. Einstein and M. Smoluchowski that the statistical
approach in the theory of such systems can link to each other their
micro- and macroscopic descriptions. In this approach, one deals
with the probabilities with which points of the phase space lie in
its subsets. The corresponding probability measures are then
considered as the states of the system. However, for interacting
particles, the direct study of the evolution of such states
encounters serious technical difficulties. In \cite{Bogol},
N. N. Bogoliubov suggested to do this by means of the so-called
correlation (moment) functions. Their evolution is obtained from an
infinite system of equations \cite{Dob}, called now BBGKY hierarchy
or chain, that links to each other correlation functions of
different order. Starting from the late 1990'th, a similar
statistical approach is being implemented in the dynamics of states
on $\Gamma$, see \cite{FKK} and the references therein.
Gradually, it has become clear also for theoretical
biologists\cite{BCFKKO} that the theory developed in this framework
can provide effective methods for studying individual-based models
of large systems of living entities.

In this work, the evolution of states $\mu_0 \mapsto \mu_t$ on $\Gamma$  is described by
the Fokker-Planck equation\footnote{For further details on the dynamics of states on
$\Gamma$ see \cite{Berns,FKKo,FKK-MFAT,FKK,DimaN,Dima,Dima2}.}
\begin{equation}
  \label{Fokk}
\frac{d}{dt} \mu_t = L^* \mu_t, \qquad  \mu_t|_{t=0} = \mu_0, \qquad t>0,
\end{equation}
in which `operator' $L^*$ specifies the model. One can also study
the evolution of {\it observables} -- real valued functions on
$\Gamma$,
 by means of the Kolmogorov equation
\begin{equation}
 \label{R2}
\frac{d}{d t} F_t = L F_t , \qquad F_t|_{t=0} = F_0, \qquad t>0.
\end{equation}
`Operators'
 $L$ and $L^*$ are related to each other in such a way that the solutions of (\ref{Fokk}) and (\ref{R2})
satisfy
\begin{equation*}
\int_{\Gamma} F_0 d \mu_t  = \int_{\Gamma} F_t d \mu_0, \qquad t>0.
\end{equation*}
For the model studied in this work, $L$  has
the form
\begin{eqnarray}\label{R20}
(LF)(\gamma) &=& \sum_{x\in \gamma}\left[ m + E^{-} (x, \gamma\setminus x) \right]\left[F(\gamma\setminus x)
- F(\gamma) \right]\\[.2cm]
&& \qquad + \int_{\mathbb{R}^d} E^{+} (y, \gamma) \left[ F(\gamma\cup y) - F(\gamma) \right]dy, \nonumber
\end{eqnarray}
where
\begin{equation}
 \label{Ra20}
E^{\pm}(x, \gamma) := \sum_{y\in \gamma} a^{\pm} (x- y)\geq 0.
\end{equation}
The first term in (\ref{R20}) describes the death of the particle
located at $x$, occurring independently with rate $m\geq 0$
(intrinsic mortality) and under the influence of the other particles
in $\gamma$ (competition) with rate $E^{-}(x, \gamma\setminus x)$.
Here and in the sequel,  $x\in \mathbb{R}^d$ is also treated as a
single-point configuration.  The second term in (\ref{R20})
describes the birth of a particle at $y\in \mathbb{R}^d$ given by
the whole configuration $\gamma$ with rate $E^{+}(y, \gamma)$.

\subsection{Correlation functions}

As mentioned above, we shall construct the dynamics of the model by
employing  {\it correlation functions}, which fully characterize the
corresponding states.  Given $n\in \mathbb{N}$ and a probability
measure $\mu$, the $n$-th order correlation function $k^{(n)}_\mu$
is related to $\mu$ by the following formula
\begin{eqnarray}
  \label{KFm}
& & \int_{\Gamma} \left(\sum_{\{ x_1,\dots , x_n\} \subset \gamma}
G^{(n)}(x_1, \dots, x_n) \right) \mu(d \gamma)\\[.2cm]
& & \qquad = \frac{1}{n!}\int_{(\mathbb{R}^d)^n} G^{(n)}(x_1, \dots,
x_n)  k^{(n)}_\mu(x_1, \dots, x_n) dx_1 \cdots dx_n, \nonumber
\end{eqnarray}
which has to hold for all appropriate functions $G^{(n)}:
(\mathbb{R}^d)^n \to \mathbb{R}$, symmetric with respect to the
interchanges of its variables. Thus, each
$k^{(n)}_\mu:(\mathbb{R}^d)^n \to \mathbb{R}$ is a symmetric
function with a number of specific properties, see, e.g., \cite{FKK,DimaN,Dima,Dima2}. If one puts also $k^{(0)}_\mu \equiv
1$, then the collection of $k^{(n)}_\mu$, $n\in \mathbb{N}_0$,
determines a map, $k_\mu: \Gamma_0 \to \mathbb{R}$, defined on  the
set of finite configurations
\begin{equation}
\label{K10}
\Gamma_{0} = \bigsqcup_{n\in \mathbb{N}_0} \Gamma^{(n)} ,
\end{equation}
which is the disjoint union of the sets of $n$-particle configurations:
\begin{equation}
  \label{K10n}
\Gamma^{(0)} = \{ \emptyset\}, \qquad \Gamma^{(n)} = \{\eta \in \Gamma: |\eta| = n \}, \ \ n\in \mathbb{N}.
\end{equation}
Each $\Gamma^{(n)}$, $n\in \mathbb{N}$, can be equipped with the
topology related to the Euclidean topology of $\mathbb{R}^d$. The
restriction of $k_\mu$ to a given $\Gamma^{(n)}$, extended to a
symmetric function on $(\mathbb{R}^d)^n$, is exactly the $n$-th
order correlation function as in (\ref{KFm}). In particular,
$k^{(1)}_\mu$ is the density of the particles in state $\mu$. The
correlation function $k_{\pi_\varrho}$ of the inhomogeneous Poisson
measure $\pi_\varrho$ is
\begin{equation}
  \label{R400}
k_{\pi_\varrho} (\eta) = \prod_{x\in \eta} \varrho (x),
 \end{equation}
or, equivalently, $k^{(n)}_{\pi_\varrho} (x_1 , \dots , x_n) = \varrho(x_1) \cdots \varrho(x_n)$,
where the density $\varrho$  is supposed to be locally integrable.
A measure $\mu$ on $\mathcal{B}(\Gamma)$ is said to be {\it sub-Poissonian}
if its correlation function is such that, for for some $C>0$ and all $n\in \mathbb{N}$,
\begin{equation}
  \label{int-d}
 k_\mu^{(n)} (x_ 1 , \dots , x_n) \leq C^n, \qquad {\rm for} \ {\rm Lebesgue - a.a.} \ \ (x_1, \dots, x_n)\in(\mathbb{R}^d)^n.
\end{equation}
In a way, a sub-Poissonian state is similar to the Poissonian state
in which the particles are independently placed in $\mathbb{R}^d$.
At the same time, the increase of $k^{(n)}_\mu$  as $n!$, see
(\ref{99}) below, corresponds to the appearance of {\it clusters} in
state $\mu$.

By an appropriate procedure,\cite{Dima2} the Cauchy problem in
(\ref{R2}) is transformed into the following one
\begin{equation}
  \label{R4}
\frac{d}{d t} k_t = L^\Delta k_t, \qquad k_t|_{t=0} = k_0,
\end{equation}
where `operator'
\begin{align}
\label{ltrian}
(L^\Delta k)(\eta)  = & -\left[\sum_{x\in \eta}(m+E^{-}(x,\eta\setminus x))\right]k(\eta)+
\sum_{x\in\eta}E^{+}(x,\eta\setminus x) k(\eta\setminus x) \\[.2cm]
 &+ \int_{\mathbb R^{d}}\sum_{y\in\eta}a^{+}(x-y)k((\eta\setminus y)\cup
x)dx
-\int_{\mathbb R^{d}}E^{-}(x,\eta)k(\eta\cup x)dx \nonumber
\end{align}
is calculated from that in (\ref{R20}), and $k_0$ is the correlation
function of  $\mu_0$. In contrast to those in (\ref{R20}) and
(\ref{Ra20}), the sums in (\ref{ltrian}) are finite -- the advantage
of passing to (\ref{R4}). In terms of the `components' $k^{(n)}_t$,
the equation in (\ref{R4}) is an infinite chain of linked linear
equations, analogous to the BBGKY hierarchy mentioned above. The
first equations in (\ref{ltrian}) are: $dk^{(0)}_t /dt = 0$, and
\begin{align}
  \label{ltr}
\frac{d}{dt}k^{(1)}_t(x) = &- m k^{(1)}_t(x) - \int_{\mathbb{R}^d}
a^{-}(x-y) k^{(2)}_t(x,y) dy \\&+ \int_{\mathbb{R}^d} a^{+}(x-y)
k^{(1)}_t(y) dy .\nonumber
\end{align}
The equation with $dk^{(2)}_t /dt$ contains $k^{(n)}_t$
with $n = 1, 2, 3$, etc. Theoretical biologists `solve' such chains
by decoupling; cf. \cite{Mu}. In
the simplest version, one sets
\begin{equation}
\label{ltr0}
k_t^{(2)}(x, y)\simeq
 k_t^{(1)}(x)k_t^{(1)}(y),
\end{equation}
which amounts to neglecting spatial pair correlations (so
called {\it mean field} approximation). Thereafter, (\ref{ltr}) turns  into
the following nonlinear (closed) equation
\begin{eqnarray}
  \label{ltr1}
\frac{d}{dt}k_t^{(1)}(x) & = & - m k^{(1)}(x) - k_t^{(1)}(x) \int_{\mathbb{R}^d} a^{-}(x-y)
k_t^{(1)}(y) dy\\[.2cm] &&+ \int_{\mathbb{R}^d} a^{+}(x-y) k_t^{(1)}(y) dy , \nonumber
\end{eqnarray}
which is the kinetic
equation for our model, see  Section \ref{KSec} below.
Note that the question of whether the evolving states remain
sub-Poissonian, and hence clusters do not appear, can be answered
only by studying the whole chain $k^{(n)}_\mu$, $n\in \mathbb{N}_0$,
cf. (\ref{int-d}). For the contact model -- a particular case of the model we study corresponding
to $a^{-} \equiv 0$ in (\ref{R20}) and (\ref{ltrian}), it is
known\cite{Dima} that
\begin{equation}
\label{99}
{\rm const}\cdot n! c^n_t \leq
 k^{(n)}_t (x_1, \dots, x_n) \leq {\rm const}\cdot n! C^n_t,
\end{equation}
where the left-hand inequality holds if $x_i$ belong to a ball
of small enough radius.

\subsection{Mesoscopic description}

Along with the microscopic theory based on (\ref{R2}) and
(\ref{R4}), we provide in this work the mesoscopic description of
the evolution obtained from (\ref{R4}) by means of the Vlasov
scaling, see \cite{FKK-MFAT,DimaN}, and also Section 6 in
\cite{Dob} and  \cite{P} where the general aspects
of the scaling of interacting particle systems are discussed. In the
`physical language', the Vlasov scaling can be outlined as follows.
One considers the system at the scale where the particle density is
large, and hence the interaction should be respectively small in
order that the total interaction energy take intermediate values. In
the scaling limit, the corpuscular structure disappears and the
system turns into a medium described solely by the density, cf.
(\ref{R400}), whereas the interactions are taken into account in a
`mean-field-like' way. In this limit, the ansatz in (\ref{ltr0})
becomes exact and thus the evolution of the density is obtained from
(\ref{ltr1}). An important issue here is to control this passage in
a mathematically rigorous way, which includes  the convergence of
the `rescaled' correlation functions.

In this work, like also in \cite{Berns,FKK-MFAT,DimaN}, the
scale is described by a single parameter, $\varepsilon\in (0,1]$,
tending to zero in the scaling limit and taking value $\varepsilon
=1$ for the initial system described by (\ref{R4}). In order to get
high densities for small $\varepsilon$, we assume that the
correlation function $k_{0,\varepsilon}$ for small $\varepsilon$
behaves like $k_{0,\varepsilon}(\eta) \sim \varepsilon^{-|\eta|}$,
$\eta\in \Gamma_0$,  and thus the rescaled correlation function
$r_{0, \varepsilon} (\eta) = \varepsilon^{|\eta|}k_{0, \varepsilon}
(\eta)$, or equivalently
 \begin{equation}
 \label{ren-1}
r^{(n)}_{0, \varepsilon} (x_1 , \dots , x_n) = \varepsilon^n
k_{0,\varepsilon}^{(n)} (x_1 , \dots , x_n), \qquad n\in \mathbb{N},
\end{equation}
converges as $\varepsilon \to 0$ to the correlation function of a
certain state. Namely, we assume that $r^{(n)}_{0, \varepsilon} \to
r^{(n)}_0$ in $L^\infty ((\mathbb{R}^d)^n)$ for each $n\in
\mathbb{N}$. Next,  we rescale the interaction in (\ref{ltrian}) by
multiplying $a^{-}$ by $\varepsilon$, which yields
$L^\Delta_\varepsilon$ from $L^\Delta$ given in (\ref{ltrian}). This
means that the evolution $k_{0,\varepsilon} \mapsto
k_{t,\varepsilon}$ of the `dense' system
 is now
governed by (\ref{R4}) with $L^\Delta_\varepsilon$ in the
right-hand side.  We expect that the evolving system remains
`dense', and thus introduce the rescaled correlation functions
\begin{equation}\label{ordersing}
r_{t,\varepsilon}^{(n)}(x_, \dots, x_n) = \varepsilon^{n}
k_{t,\eps}^{(n)}(x_, \dots, x_n)
, \qquad n\in \mathbb{N},
\end{equation}
that  solve the following
Cauchy problem
\begin{equation}
  \label{RenK}
\frac{d}{d t} r_{t,\varepsilon} = L^\Delta_{\varepsilon, {\rm ren}}
r_{t,\varepsilon}, \qquad r_{t,\varepsilon}|_{t=0} =
r_{0},
\end{equation}
 which one derives from (\ref{R4}) by means of
(\ref{ren-1}) and (\ref{ordersing}). `Operator'
\begin{equation}
  \label{E1}
 L^\Delta_{\varepsilon,{\rm ren}} = R_\eps
 L^\Delta_{\varepsilon} R_{\eps^{-1}}, \qquad \left( R_{\varepsilon^{\pm 1}}k\right) \left( \eta \right) =\varepsilon^{\pm \left\vert \eta
\right\vert }k\left( \eta \right)
\end{equation}
has the following structure
\begin{equation}
  \label{RenK1}
L^\Delta_{\varepsilon,{\rm ren}} = V + \varepsilon C,
\end{equation}
where $V$ and $C$ are given in (\ref{22A}) below. Along with
(\ref{RenK}), it is natural to consider the Cauchy problem
\begin{equation}
  \label{E5}
  \frac{d}{d t} r_t = V r_t, \qquad r_t|_{t=0} = r_0,
\end{equation}
where $r_0$ is the same  as in (\ref{RenK}), and to expect that the
solution of (\ref{RenK}) converges to that of (\ref{E5}) as
$\varepsilon \to 0$. This would give interpolation between the cases
of $\varepsilon =1$ and $\varepsilon =0$, i.e., between (\ref{R4})
and (\ref{E5}). The main peculiarity of (\ref{E5}) is that the
evolution $r_0 \mapsto r_t$ obtained therefrom `preserves chaos'.
That is, if $r_0$ is the correlation function of the Poisson measure
$\pi_{\varrho_0}$, see (\ref{R400}), then, for all $t>0$ for which
one can solve  (\ref{E5}), the product form of (\ref{R400}) is
preserved, i.e., the solution is the product of the values of the
density $\varrho_t$ which solves
 the {\it kinetic
equation}, cf. (\ref{ltr1}),
\begin{eqnarray}
\label{V-eqn-gen} \frac{d}{dt}\varrho_t (x) & = &  - m \varrho_t(x)
 - \varrho_t(x) \int_{\mathbb{R}^d} a^{-}(x-y) \varrho_t (y) dy\\[.2cm]
 &+& \int_{\mathbb{R}^d} a^{+}(x-y) \varrho_t (y) dy , \qquad \varrho_t|_{t=0} = \varrho_0, \nonumber
\end{eqnarray}
\subsection{The aims of the paper}
For the model specified in \eqref{R20} we aim at:
\begin{itemize}
\item  proving that due to the competition the Cauchy problems in (\ref{RenK})
and in (\ref{E5}) have sub-Poissonian solutions on the same time
interval $[0,T_*)$ (done in Theorems \ref{2tm} and \ref{op-tm});
\item  proving that the solution of (\ref{E5}) has the product form of (\ref{R400})
with $\varrho_t$ which solves (\ref{V-eqn-gen}) (done in Lemma \ref{Vlm});
\item proving that the solutions of (\ref{RenK}) converge to that of (\ref{E5}),
i.e., $r_{t, \varepsilon} \to r_t$ as $\varepsilon \to 0$ (done in Theorem \ref{op1-tm});
\item proving  solvability and studying the solutions of the kinetic
equation (\ref{V-eqn-gen}) (done in Theorems \ref{R1tm} --
\ref{K4tm}).
\end{itemize}
In realizing this program, we partly follow the scheme developed in
\cite{Berns,FKKo}, based on Ovcyannikov's method,\cite{trev}
by means  of which the solutions are constructed in  scales of
Banach spaces. However, here we consider essentially different model
where this method cannot be applied directly. Instead, we elaborated
an original technique based on the use of `sun-dual' semigroups
acting in scales of Banach spaces perturbed by operators treatable
by Ovcyannikov's method,  see subsections \ref{OvcS} and
\ref{op-lmS} below, and especially Remark \ref{Ovcrk}. Further
comparison of our results and those obtained for the same model in
\cite{FKK-MFAT,Dima}  are given in subsection
\ref{Conclsec}.

\section{The Mathematical Framework and the Model}\label{Sec2}

For more details on the mathematics used in this paper, see \cite{Berns,FKKo,FKK-MFAT,Dima,Dima2,Tobi}.

By $\mathcal{B}(\mathbb{R}^d)$ and $\mathcal{B}_{\rm
b}(\mathbb{R}^d)$ we denote the set of all Borel and all bounded
Borel subsets of $\mathbb{R}^d$, respectively. The configuration
space (\ref{C1}) is endowed with the vague topology -- the weakest
topology that makes all the maps
\[
\Gamma \ni \gamma \mapsto
\int_{\mathbb{R}^d} f(x) \gamma (dx)= \sum_{x\in \gamma} f(x) , \quad f\in C_0 (\mathbb{R}^d),
\]
continuous. Here $C_0 (\mathbb{R}^d)$ stands for the set of all
compactly supported continuous functions $f:\mathbb{R}^d \rightarrow
\mathbb{R}$. The vague topology is metrizable in the way that makes
$\Gamma$ a complete and separable metric (Polish) space. By
$\mathcal{B}(\Gamma)$ we denote the corresponding Borel
$\sigma$-algebra.

The set  $\Gamma_0\subset \Gamma$, see (\ref{K10}) and (\ref{K10n}),
is endowed with the topology  of the disjoint union of the sets
$\Gamma^{(n)}$, each of which is endowed with the topology related
to the Euclidean topology of $\mathbb{R}^d$. By
$\mathcal{B}(\Gamma_{0})$ we denote  the corresponding Borel
$\sigma$-algebra. The vague topology of $\Gamma$ induces on
$\Gamma_0$ another topology, different from that just mentioned.
However, see Lemma~1.1 and Proposition~1.3 in \cite{Obata},
the corresponding Borel $\sigma$-algebras coincide, and hence $
\mathcal{B}(\Gamma_0) = \{ A \in \mathcal{B}(\Gamma): A \subset
\Gamma_0\}$. Therefore, a probability measure $\mu$ on
$\mathcal{B}(\Gamma)$ such that $\mu(\Gamma_0)=1$ can be redefined
as a measure on $\mathcal{B}(\Gamma_0)$. Moreover, a function $G:
\Gamma_0 \subset \Gamma \to \mathbb{R}$ is
$\mathcal{B}(\Gamma)/\mathcal{B}(\mathbb{R})$-measurable if and only
if its restrictions to each $\Gamma^{(n)}$, more precisely, the
functions $G^{(n)}:(\mathbb{R}^d)^n \to \mathbb{R}$ such that
\begin{equation*}
G^{(0)} =G(\emptyset)\in \mathbb{ R}, \quad  G^{(n)} (x_1 , \dots , x_n) =
G(\gamma) \quad {\rm for} \ \ \gamma = \{x_1 , \dots , x_n\}, \quad n \in \mathbb{N},
\end{equation*}
are symmetric and Borel.
By the expression
\begin{equation*}
\int_{\Gamma_0} G(\eta) \lambda (d \eta) = G^{(0)} + \sum_{n=1}^\infty
\frac{1}{n!} \int_{(\mathbb{R}^d)^n} G^{(n)} (x_1 , \dots , x_n) d x_1 \cdots d x_n,
\end{equation*}
which has to hold for all compactly supported continuous functions
$G^{(n)}$, we  define  a $\sigma$-finite measure $\lambda$ on
$\mathcal{B}(\Gamma_0)$, called the {\it Lebesgue- Poisson} measure.
By (\ref{KFm}) we then  get
\begin{eqnarray}
\label{19A}
& & \int_{\Gamma} \left( \sum_{\eta \Subset \gamma} G(\eta)\right) \mu(d \gamma) =
\langle\!\langle G, k_\mu \rangle\!\rangle:= \int_{\Gamma_0} G(\eta) k_\mu(\eta) \lambda (d \eta) \\[.2cm]
& & \qquad = G^{(0)} + \sum_{n=1}^\infty \frac{1}{n!} \int_{(\mathbb{R}^d)^n} G^{(n)}
(x_1 , \dots , x_n) k^{(n)}_\mu (x_1 , \dots , x_n) d x_1 \cdots d x_n, \nonumber
\end{eqnarray}
where the first sum  runs over all finite subsets of $\gamma$.
By Lemma 2.1 in \cite{FKK} we also  have the following useful property
\begin{equation}
 \label{12A}
\int_{\Gamma_0} \sum_{\xi \subset \eta} H(\xi, \eta \setminus \xi,
\eta) \lambda (d \eta) = \int_{\Gamma_0}\int_{\Gamma_0}H(\xi, \eta,
\eta\cup \xi) \lambda (d \xi) \lambda (d \eta).
\end{equation}
Regarding the kernels in (\ref{Ra20})
we suppose that
\begin{equation}
 \label{AA}
a^{\pm} \in L^1(\mathbb{R}^d)\cap L^\infty(\mathbb{R}^d), \qquad a^{\pm}(x) = a^{\pm}(-x) \geq 0.
\end{equation}
Then we set
\begin{equation}
 \label{14A}
\langle{a}^{\pm}\rangle  =  \int_{\mathbb{R}^d} a^{\pm} (x)dx,\qquad \|a^{\pm} \|= \esssup_{x\in \mathbb{R}^d} a^{\pm}(x),
\end{equation}
and
\begin{equation}
 \label{15A}
 E^{\pm} (\eta)  =  \sum_{x\in \eta}E^{\pm} (x,\eta\setminus x) = \sum_{x\in \eta} \sum_{y\in \eta\setminus x}a^{\pm} (x-y).
\end{equation}
By (\ref{AA}), we then have
\begin{equation}
 \label{AB}
E^{\pm} (\eta) \leq  \|a^{\pm} \| |\eta|^2.
\end{equation}
`Operator' $L^\Delta_{\varepsilon,{\rm ren}}$ in (\ref{RenK}) and
(\ref{E1}) has the following structure, cf. (\ref{RenK1}),
\begin{equation}
{L}^\Delta_{\varepsilon, {\rm ren}} = A_0 +B + \varepsilon C = V + \varepsilon C \label{21A}
\end{equation}
where
\begin{gather}
(A_0 k)(\eta) = - m |\eta| k(\eta),\label{22A}\\[.2cm]
(B k)(\eta) =  - \int_{\mathbb{R}^d} E^{-} (y,\eta) k(\eta\cup y) dy + \int_{\mathbb{R}^d} \sum_{x\in \eta} a^{+} (x - y) k(\eta \setminus x \cup y) d y
, \nonumber \\[.2cm]
(C k)(\eta) = - E^{-}(\eta) k(\eta) + \sum_{x\in \eta} E^{+} (x, \eta \setminus x) k(\eta \setminus x). \nonumber
\end{gather}
Note that $L^\Delta_{1,{\rm  ren}}$ is exactly $L^\Delta$ given in \eqref{ltrian}.

\section{The Evolution of Correlation Functions}
\label{Sec4}
\subsection{The statements}
\label{CFss1}
If the competition is absent,  the correlation functions are
bounded from below by $n!$, see (\ref{99}), and hence clusters appear in the corresponding state.
A principal questions regarding the considered model
is whether the competition  can prevent from such clustering. In view of (\ref{int-d}),
the answer will be affirmative if
\[
 \|k^{(n)}\|_{L^\infty((\mathbb{R}^d)^n)} \leq C^n, \qquad n\in \mathbb{N},
\]
for some $C>0$. Then, as in eqs. (3.8) - (3.10) of \cite{Berns}, we introduce the Banach spaces of sub-Poissonian
correlation functions in which
we solve  (\ref{RenK}) and (\ref{E5}) as follows.
For $\alpha \in \mathbb{R}$, we set
\begin{equation}
  \label{z400}
 \|k\|_\alpha = \sup_{n\in \mathbb{N}_0} e^{n \alpha} \|k^{(n)}\|_{L^\infty((\mathbb{R}^d)^n)},
\end{equation}
that can also be rewritten in the form
\begin{equation}
 \label{z30}
\|k\|_\alpha = \esssup_{\eta \in \Gamma_0} |k(\eta)|\exp(\alpha|\eta|).
\end{equation}
Thereafter, we define
\begin{equation}
  \label{z300}
\mathcal{K}_\alpha = \{k:\Gamma_0\to \mathbb{R}:
\|k\|_\alpha < \infty\},
\end{equation}
which is a Banach space. In fact, we need the
scale
of such spaces $\{\mathcal{K}_\alpha: \alpha \in \mathbb{R}\}$.
For $\alpha'' <
\alpha'$, we have that $\|k\|_{\alpha''} \leq \|k\|_{\alpha'}$ ; and
hence,
\begin{equation}
 \label{z31}
\mathcal{K}_{\alpha'} \hookrightarrow \mathcal{K}_{\alpha''}, \qquad \ {\rm for} \ \alpha'' < \alpha'.
\end{equation}
Our next aim is to define $L^\Delta_{\varepsilon, {\rm ren}}$, written  in
(\ref{21A}) and (\ref{22A}), as a linear operator in $\mathcal{K}_\alpha$ for a given $\alpha\in \mathbb{R}$. Put
\begin{equation}
  \label{op-1}
\mathcal{D}_\alpha (A_0) = \{ k \in \mathcal{K}_\alpha: A_0 k \in \mathcal{K}_\alpha\}.
\end{equation}
The sets $\mathcal{D}_\alpha(B)$ and $\mathcal{D}_\alpha(C)$ are defined analogously. Then
\begin{equation}
  \label{op-2}
\mathcal{D}_\alpha (L^\Delta_{\varepsilon, {\rm  ren}}) := \mathcal{D}_\alpha (A_0) \cap \mathcal{D}_\alpha (B) \cap \mathcal{D}_\alpha (C)
\end{equation}
is the domain of $L^\Delta_{\varepsilon,{\rm ren}}$ in
$\mathcal{K}_\alpha$. Note that $\mathcal{D}_\alpha
(L^\Delta_{\varepsilon,{\rm ren}})$  is the
same for all $\varepsilon>0$, and
\begin{equation}
  \label{op-2a}
\mathcal{D}_\alpha (L^\Delta_{0,{\rm ren}}) = \mathcal{D}_\alpha(V)
= \mathcal{D}_\alpha (A_0) \cap \mathcal{D}_\alpha (B) \supset
\mathcal{D}_\alpha (L^\Delta_{\varepsilon, {\rm ren}}), \quad
\varepsilon \in (0,1].
\end{equation}
Let us show that
\begin{equation}
  \label{op-3}
\forall \alpha' > \alpha \qquad \quad \mathcal{K}_{\alpha'} \subset
\mathcal{D}_\alpha (L^\Delta_{\varepsilon, {\rm ren}}).
\end{equation}
By (\ref{z30}), we have
\begin{equation*}
|k(\eta)| \leq \|k\|_{\alpha'} \exp( - \alpha' |\eta|), \qquad \eta \in \Gamma_0.
\end{equation*}
Applying this in (\ref{22A}), by (\ref{14A}) and (\ref{AB}) we then get
\begin{equation}
  \label{op-44}
|(Ck)(\eta)| \leq |\eta|^2 \exp (- (\alpha' - \alpha) |\eta|)\left[
\| a^{-} \|  + \| a^{+} \| e^{\alpha'}\right]\|k\|_{\alpha'} \exp( -
\alpha |\eta|),
\end{equation}
and also
\begin{eqnarray}
\label{op-45}
|(Bk)(\eta)| & \leq &  \|k\|_{\alpha'} \exp( - \alpha' |\eta| - \alpha' ) \int_{\mathbb{R}^d} E^{-}(y, \eta) d y\\[.2cm]
& + & \|k\|_{\alpha'} \exp( - \alpha' |\eta|) \int_{\mathbb{R}^d} \sum_{x\in \eta} a^{+} (x-y) dy \nonumber \\[.2cm]
& \leq & |\eta|\exp (- (\alpha' - \alpha) |\eta|)\left[ \langle a^{-} \rangle e^{-\alpha'} +
\langle a^{+} \rangle\right]\|k\|_{\alpha'} \exp( - \alpha |\eta|). \nonumber
\end{eqnarray}
In a similar way, one estimates $|(A_0k)(\eta)|$.
These three estimates readily yield (\ref{op-3}).
\begin{definition}
\label{1zdf} By a classical solution of the problem (\ref{RenK}), in
the space $\mathcal{K}_\alpha$ and on the time interval $[0,T)$, we
understand a map $[0,T)\ni t \mapsto r_{t, \varepsilon} \in
\mathcal{D}_\alpha (L^\Delta_{\varepsilon, {\rm ren}})$, cf.
(\ref{op-2}), continuously differentiable on $[0,T)$, such that
(\ref{RenK}) is satisfied  for $t\in [0,T)$. A classical solution of
(\ref{E5}), cf. (\ref{op-2a}), is defined in the same way.
\end{definition}
\begin{remark}
 \label{1zrk}
In view of (\ref{op-3}), we have that $r_{t, \varepsilon} \in
\mathcal{D}_\alpha (L^\Delta_{\varepsilon, {\rm ren}})$ whenever
$r_{t, \varepsilon} \in \mathcal{K}_{\alpha_t}$ for some $\alpha_t >
\alpha$.
\end{remark}
The main assumption under which we are going to solve (\ref{RenK})
is the following: there exists $\theta >0$ such that
\begin{equation}
 \label{z14}
a^{+} (x) \leq \theta a^{-} (x), \qquad {\rm for} \ {\rm a.a.} \ x \in \mathbb{R}^d.
\end{equation}
Fix $\alpha^*\in \mathbb{R}$ and  set, cf. (\ref{14A}),
\begin{equation}
 \label{z15}
T (\alpha)  = \frac{\alpha^* - \alpha}{\langle{a}^{+}\rangle + \langle{a}^{-}\rangle e^{-\alpha}}, \qquad \alpha < \alpha^*.
\end{equation}
\begin{theorem}
 \label{2tm}
Let (\ref{z14}) be satisfied, and let $\alpha^* \in \mathbb{R}$ be such that
\begin{equation}
 \label{z16}
e^{\alpha^*} \theta < 1.
\end{equation}
Then, for each $\alpha<\alpha^*$, the problem in (\ref{RenK}) with $r_0\in \mathcal{K}_{\alpha^*}$
has a unique classical solution in $\mathcal{K}_{\alpha}$ on $[0,T (\alpha))$.
\end{theorem}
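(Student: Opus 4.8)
The plan is to split $L^\Delta_{\varepsilon,\mathrm{ren}} = A_0 + B + \varepsilon C$, see (\ref{21A})--(\ref{22A}), as $Q_\varepsilon + B$ with $Q_\varepsilon := A_0 + \varepsilon C$, to construct a semigroup $S_\varepsilon(t) = e^{tQ_\varepsilon}$ on the whole scale $\{\mathcal{K}_\alpha\}$, and to treat $B$ as a perturbation through a Duhamel equation solved by Ovcyannikov's method. The reason for keeping $\varepsilon C$ inside the generator rather than perturbing with it is visible in (\ref{op-44}) and (\ref{op-45}): the bound for $B$ between $\mathcal{K}_{\alpha'}$ and $\mathcal{K}_\alpha$ involves only the integral norms $\langle a^\pm\rangle$ and behaves like $(\alpha'-\alpha)^{-1}$, which is exactly the first-order pole Ovcyannikov's method accepts and which produces the time $T(\alpha)$ in (\ref{z15}); by contrast the bound for $C$ carries the factor $|\eta|^2$, hence a second-order pole $(\alpha'-\alpha)^{-2}$ together with the sup-norms $\|a^\pm\|$. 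A purely perturbative treatment of $\varepsilon C$ would therefore neither reproduce the clean time $T(\alpha)$ nor give estimates uniform in $\varepsilon$; instead one must exploit the favorable sign of the diagonal part $-\varepsilon E^-(\eta)$ of $C$, which supplies decay, by absorbing $\varepsilon C$ into the semigroup.

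First I would construct $S_\varepsilon(t)$. Since $(Q_\varepsilon k)(\eta)$ depends on $k$ only at $\eta$ and at the configurations $\eta\setminus x$, the operator is lower triangular in $|\eta|$, so the equation $\dot u = Q_\varepsilon u$ is solvable recursively: at each level and each fixed $\eta$ it is a scalar inhomogeneous linear ODE with the nonpositive diagonal coefficient $-m|\eta| - \varepsilon E^-(\eta)$ and an inhomogeneity built from the lower level via the positive term $\varepsilon\sum_{x\in\eta} E^+(x,\eta\setminus x)$. Conjugating by the weight $e^{\alpha|\eta|}$, the off-diagonal coefficient picks up a factor $e^\alpha$ because adding a point raises the weight by $e^\alpha$; thus the weighted generator has diagonal $-m|\eta| - \varepsilon E^-(\eta)\le 0$ and total off-diagonal mass $\varepsilon e^\alpha E^+(\eta)$. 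By (\ref{z14}) one has $E^+(\eta)\le\theta E^-(\eta)$, so $\varepsilon e^\alpha E^+(\eta)\le \varepsilon E^-(\eta)$ whenever $e^\alpha\theta\le 1$, which under (\ref{z16}) holds for every $\alpha\le\alpha^*$. Hence each row sum of the weighted generator is $\le -m|\eta|\le 0$, the resulting semigroup is sub-Markov, and $S_\varepsilon(t)$ is a contraction on every $\mathcal{K}_\alpha$ with $\alpha\le\alpha^*$, uniformly in $\varepsilon\in(0,1]$. As the multiplication part $-m|\eta|-\varepsilon E^-(\eta)$ is unbounded, $S_\varepsilon$ fails to be strongly continuous on these sup-type spaces; I would remove this obstruction in the usual way by passing to the \emph{sun-dual}, the maximal subspace of strong continuity, which is what legitimizes the subsequent Duhamel analysis.

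With $S_\varepsilon$ available, I would recast (\ref{RenK}) as $r_{t,\varepsilon} = S_\varepsilon(t)r_0 + \int_0^t S_\varepsilon(t-s)\,B\,r_{s,\varepsilon}\,ds$ and solve it by Picard iteration. Each iterate applies $B$ once more, and by (\ref{op-45}) every such application is bounded by $\frac{1}{e(\alpha'-\alpha)}\bigl(\langle a^+\rangle + \langle a^-\rangle e^{-\alpha}\bigr)$, the factor $1/e$ coming from $\sup_n n\,e^{-(\alpha'-\alpha)n}$ and the bound $e^{-\alpha'}\le e^{-\alpha}$. Splitting the budget $\alpha^*-\alpha$ into $n$ equal slices for the $n$-fold term, using the contractivity of $S_\varepsilon$ and the time factor $t^n/n!$, and invoking $n^n/n!\sim e^n$ so that the two exponential factors cancel, the $n$-th term is dominated by $\bigl(t(\langle a^+\rangle + \langle a^-\rangle e^{-\alpha})/(\alpha^*-\alpha)\bigr)^n$ up to lower-order factors; the series then converges precisely for $t < T(\alpha)$, and in fact delivers $r_{t,\varepsilon}\in\mathcal{K}_{\alpha_t}$ with some $\alpha_t>\alpha$ when $t<T(\alpha)$. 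Uniqueness follows since the difference of two solutions satisfies the homogeneous integral equation and the same estimates force it to vanish. Finally, by Remark \ref{1zrk} the solution lies in $\mathcal{D}_\alpha(L^\Delta_{\varepsilon,\mathrm{ren}})$, and differentiating the integral equation on the sun-dual shows it is a classical solution in the sense of Definition \ref{1zdf}.

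The hard part will be the second step: producing a semigroup for $A_0+\varepsilon C$ that is contractive on the entire scale, uniformly in $\varepsilon$, with an existence time governed solely by the integral norms $\langle a^\pm\rangle$. This is exactly where the sign condition (\ref{z14}) together with (\ref{z16}) enters, and it is the reason a direct application of Ovcyannikov's method to the full $L^\Delta_{\varepsilon,\mathrm{ren}}$ does not work; the concomitant loss of strong continuity is the technical price, paid off by the sun-dual construction.
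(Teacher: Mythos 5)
Your proposal is correct and, in its overall architecture, is exactly the paper's proof: the same splitting of \eqref{21A} into a generator part $A_0+\varepsilon C$ and an Ovcyannikov perturbation $B$, the same diagnosis for why this splitting is forced (the second-order pole in the second estimate of \eqref{op-50} versus the first-order pole \eqref{nB} for $B$ -- this is precisely Remark \ref{Ovcrk}), the same sun-dual device to cope with the failure of strong continuity on the $L^\infty$-type spaces, the same Dyson--Ovcyannikov iteration with the $n$-fold slicing of the budget $\alpha^*-\alpha$ and the $n^n/n!$ bookkeeping as in \eqref{nB1}--\eqref{Bkapp}, and the same uniqueness argument via the iterated homogeneous integral equation, cf.\ \eqref{NOV}. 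The one point where you genuinely depart from the paper is the proof of the key semigroup lemma (Lemma \ref{op-lm}): you build $S_{\alpha,\varepsilon}$ \emph{directly} on $\mathcal{K}_\alpha$, exploiting that $A_0+\varepsilon C$ is triangular in $|\eta|$ (level $n$ is driven only by levels $n$ and $n-1$), solving $\dot u=(A_0+\varepsilon C)u$ recursively by variation of constants, and extracting contractivity from the weighted row-sum inequality $\varepsilon e^{\alpha}E^{+}(\eta)\le\varepsilon E^{-}(\eta)$, i.e.\ from \eqref{z14} and \eqref{z16}. The paper instead works in the predual space $\mathcal{G}_\alpha=L^1(\Gamma_0,e^{-\alpha|\cdot|}d\lambda)$, where the very same inequality appears as the non-positivity of \eqref{z701} (with the slack parameter $\varkappa$), invokes the Thieme--Voigt theorem (Proposition \ref{1lm}) to get a sub-stochastic semigroup $\widehat{S}_{\alpha,\varepsilon}$, and then restricts the adjoint semigroup to $\mathcal{A}_{\alpha,\varepsilon}=\overline{\mathrm{Dom}(\widehat{A}^*_\varepsilon)}$, for which Pazy's duality theory delivers the $C_0$ property and identifies the generator as the part of $\widehat{A}^*_\varepsilon$ in $\mathcal{A}_{\alpha,\varepsilon}$, cf.\ \eqref{z702}--\eqref{z34b}. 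Your route is more elementary and gives positivity, uniformity in $\varepsilon$, and the consistency of the semigroups across the scale (claim (c) of Lemma \ref{op-lm}) for free, since the defining recursion does not mention $\alpha$; what it does not give for free is precisely what you glossed over: that your ``maximal subspace of strong continuity'' is closed and invariant, that $\mathcal{K}_{\alpha'}$ (for $\alpha'>\alpha$) lies in it and in the generator's domain, and that the generator acts there as $A_0+\varepsilon C$ -- claims (a) and (b) of Lemma \ref{op-lm}. These can indeed be checked by hand from your recursion, using \eqref{z11Aa}-type estimates to get $\|S_{\alpha,\varepsilon}(t)k-k\|_\alpha\le Ct\|k\|_{\alpha'}$, but they are part of the load the paper shifts onto duality theory. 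A minor terminological point: absent a dual pairing, your subspace is the continuity subspace of the semigroup rather than a sun-dual in the technical sense; also note that one still needs the analogue of Lemma \ref{op-lm2} (the $\varepsilon$-dependence estimate \eqref{z162}) only for Theorem \ref{op1-tm}, not here, so your omission of it is harmless for the statement at hand.
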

\begin{theorem}
  \label{op-tm}
For each  $\alpha^*\in \mathbb{R}$ and $\alpha<\alpha^*$, the problem in (\ref{E5}) with $r_0\in \mathcal{K}_{\alpha^*}$
has a unique classical solution in $\mathcal{K}_{\alpha}$ on $[0,T (\alpha))$.
\end{theorem}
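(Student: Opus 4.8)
The plan is to exploit the fact that, in contrast to the full generator $L^\Delta_{\varepsilon,{\rm ren}} = V + \varepsilon C$ of Theorem \ref{2tm}, here we only face $V = A_0 + B$, and neither piece carries the troublesome $|\eta|^2$ growth of $C$ that forced the conditions (\ref{z14}) and (\ref{z16}). Accordingly I would split the two summands and treat them by different mechanisms: $A_0$, being the multiplication operator $(A_0 k)(\eta) = -m|\eta|k(\eta)$ with $m \geq 0$, I would absorb into a semigroup acting as a contraction on every space of the scale, while $B$ I would treat as an Ovcyannikov-type perturbation using precisely the bound already recorded in (\ref{op-45}). This division is exactly what makes the death rate $m$ disappear from the life-span $T(\alpha)$ in (\ref{z15}) and what removes the need for (\ref{z14})--(\ref{z16}).

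First I would introduce $\{S_0(t)\}_{t\geq 0}$, $(S_0(t)k)(\eta) = e^{-mt|\eta|}k(\eta)$. Since $0 \leq e^{-mt|\eta|} \leq 1$, from (\ref{z30}) one reads off $\|S_0(t)k\|_\alpha \leq \|k\|_\alpha$ for every $\alpha\in\mathbb{R}$ and $t\geq 0$, so $S_0$ is a contraction on each $\mathcal{K}_\alpha$. Its only defect is that it is not strongly continuous on a fixed $\mathcal{K}_\alpha$; restricted to a smaller $\mathcal{K}_{\alpha'}$ with $\alpha'>\alpha$, however, it is continuous into $\mathcal{K}_\alpha$, because $|e^{-mt|\eta|}-1|\leq mt|\eta|$ and $\sup_n n\,e^{-(\alpha'-\alpha)n} = 1/[e(\alpha'-\alpha)]$ turn the $t\to 0$ estimate into $t\,m/[e(\alpha'-\alpha)]\,\|k\|_{\alpha'}$. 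This ``continuity in the scale'' is all the Duhamel argument requires. Next, from (\ref{op-45}) together with $\sup_n n\,e^{-(\alpha'-\alpha)n} = 1/[e(\alpha'-\alpha)]$ and $e^{-\alpha'}\leq e^{-\alpha}$ I would record the Ovcyannikov bound
\[
\|Bk\|_\alpha \leq \frac{M}{\alpha'-\alpha}\,\|k\|_{\alpha'}, \qquad M := \frac{1}{e}\bigl(\langle a^{+}\rangle + \langle a^{-}\rangle e^{-\alpha}\bigr), \qquad \alpha < \alpha' \leq \alpha^*,
\]
the constant $M$ being uniform over the intermediate indices $\alpha'\in(\alpha,\alpha^*]$.

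I would then construct the solution as the Duhamel/Picard series $r_t = \sum_{n\geq 0} r_t^{(n)}$ with $r_t^{(0)} = S_0(t)r_0$ and $r_t^{(n)} = \int_0^t S_0(t-s)\,B\,r_s^{(n-1)}\,ds$. Estimating $r_t^{(n)}$ in $\mathcal{K}_\alpha$ by splitting $[\alpha,\alpha^*]$ into $n$ equal steps, discarding all semigroup factors by contractivity of $S_0$, using the $n$-fold time integration to produce $t^n/n!$, and applying the bound above once per step, gives $\|r_t^{(n)}\|_\alpha \leq \|r_0\|_{\alpha^*}\,(Mt)^n n^n/[n!(\alpha^*-\alpha)^n]$; with $n^n/n!\leq e^n$ this becomes $\|r_0\|_{\alpha^*}\,\bigl(eMt/(\alpha^*-\alpha)\bigr)^n$. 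Since $eM = \langle a^{+}\rangle + \langle a^{-}\rangle e^{-\alpha}$, the geometric series converges uniformly on compacts of $[0,T(\alpha))$, with $T(\alpha)$ exactly as in (\ref{z15}). That the sum is a classical solution follows from Remark \ref{1zrk}: for $t < T(\alpha)$ continuity of $T$ furnishes $\alpha'\in(\alpha,\alpha^*)$ with $t < T(\alpha')$, whence $r_t\in\mathcal{K}_{\alpha'}\subset\mathcal{D}_\alpha(V)$ by (\ref{op-2a}), and term-by-term differentiation (legitimate under the same scale estimates) yields $\tfrac{d}{dt}r_t = V r_t$. Uniqueness I would obtain from the standard Ovcyannikov argument: the difference of two solutions satisfies the homogeneous Duhamel identity, whose $n$-fold iteration is bounded by the tail of the convergent majorant above and hence vanishes on $[0,T(\alpha))$.

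The step I expect to be the main obstacle is the rigorous interface between the merely scale-continuous semigroup $S_0$ and the order-raising operator $B$: one must ensure that the Duhamel integrals are well defined (the integrand $s\mapsto S_0(t-s)B r_s^{(n-1)}$ must be continuous in an appropriate space of the scale), that the term-by-term differentiation producing $\tfrac{d}{dt}r_t = A_0 r_t + B r_t$ is justified even though $A_0$ generates no strongly continuous semigroup on a \emph{fixed} $\mathcal{K}_\alpha$, and that the time-dependent loss of index (with $r_t$ living only in $\mathcal{K}_{\alpha'}$ for $\alpha'$ decreasing to $\alpha$ as $t\uparrow T(\alpha)$) is tracked consistently. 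Once one works with a continuum of indices this is routine, but it is the only genuinely delicate point; everything else is bookkeeping already prepared in (\ref{op-45}) and (\ref{op-2a}).
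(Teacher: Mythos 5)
Your proposal is correct and follows essentially the same route as the paper: the paper's proof of Theorem \ref{op-tm} simply reruns the iteration scheme of Theorem \ref{2tm} with the explicit contraction semigroup $(S_{\alpha,0}(t)v)(\eta)=\exp(-tm|\eta|)v(\eta)$, see (\ref{nB100})--(\ref{nB101}), whose partial sums are exactly your Duhamel/Picard series, and it notes that the explicitness of this semigroup is precisely what permits dropping the conditions (\ref{z14}) and (\ref{z16}). Your Ovcyannikov bound on $B$, the $n^n/n!$ estimate yielding convergence on $[0,T(\alpha))$, and the uniqueness via iteration of the homogeneous integral equation all coincide with the paper's ingredients, cf. (\ref{nB}), (\ref{Bkapp}), and (\ref{NOV})--(\ref{NOV3}).
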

\begin{theorem}
  \label{op1-tm}
Let the assumptions of Theorems \ref{2tm} and \ref{op-tm} be satisfied, and $r_{t, \varepsilon}$ and $r_t$ be the solution of (\ref{RenK}) and (\ref{E5}), respectively. Then, for each $\alpha < \alpha^*$ and
 $t\in (0,T (\alpha))$, it follows that
\begin{equation}
   \label{z160}
 \sup_{s\in [0,t]} \|r_{s, \varepsilon} - r_s\|_{\alpha} \to 0 , \qquad {\rm as} \ \ \varepsilon \to 0.
\end{equation}
\end{theorem}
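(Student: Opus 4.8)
The plan is to treat the difference $q_{t,\varepsilon} := r_{t,\varepsilon} - r_t$ by a variation-of-constants (Duhamel) argument exploiting the splitting (\ref{21A}), namely $L^\Delta_{\varepsilon,{\rm ren}} = V + \varepsilon C$. Subtracting (\ref{E5}) from (\ref{RenK}) and using that both problems start from the same $r_0$ gives, at least formally,
\begin{equation*}
\frac{d}{dt} q_{t,\varepsilon} = V q_{t,\varepsilon} + \varepsilon\, C r_{t,\varepsilon}, \qquad q_{0,\varepsilon} = 0 ,
\end{equation*}
so that, denoting by $\{S(t,s)\}_{0\le s\le t}$ the evolution family solving (\ref{E5}) that is constructed in the proof of Theorem~\ref{op-tm}, one obtains
\begin{equation*}
r_{t,\varepsilon} - r_t = \varepsilon \int_0^t S(t,s)\, C\, r_{s,\varepsilon}\, ds .
\end{equation*}
The explicit prefactor $\varepsilon$ is what will force (\ref{z160}); everything then reduces to bounding this integral in $\mathcal{K}_\alpha$ uniformly in $\varepsilon\in(0,1]$ and in $t$ on compact subintervals of $[0,T(\alpha))$.

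First I would record the ingredients that make the right-hand side meaningful. From the construction behind Theorem~\ref{2tm} the solutions $r_{\cdot,\varepsilon}$ all exist on the common interval $[0,T(\alpha))$, the decisive point being that the existence time (\ref{z15}) is $\varepsilon$-free and that, under (\ref{z14})--(\ref{z16}), the contribution of $\varepsilon C$ is controlled uniformly in $\varepsilon\in(0,1]$; hence for every $\beta$ with $t<T(\beta)$ there is a constant, independent of $\varepsilon$, bounding $\sup_{s\in[0,t]}\|r_{s,\varepsilon}\|_\beta$. Next, the estimate (\ref{op-44}), together with $\sup_{u\ge 0} u^2 e^{-\delta u} = 4 e^{-2}\delta^{-2}$, shows that $C$ is a bounded operator $\mathcal{K}_{\beta'} \to \mathcal{K}_{\beta''}$ for any $\beta'' < \beta'$, with norm of order $(\beta'-\beta'')^{-2}$; analogously (\ref{op-45}) controls $B$, and hence $V=A_0+B$, as a scale-shifting operator. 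Finally, the family $S(t,s)$ inherits from the Ovcyannikov construction the mapping property $S(t,s):\mathcal{K}_{\beta''} \to \mathcal{K}_\alpha$, bounded uniformly, provided the elapsed time $t-s$ is admissible for the gap $\beta''-\alpha$, i.e.\ does not exceed the corresponding Ovcyannikov time.

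The main obstacle, and the only non-routine part, is the simultaneous bookkeeping of the scale parameters inside the integral. At a fixed $s$ the integrand requires $r_{s,\varepsilon}$ to lie in some scale $\beta$, then two units of scale to be spent on applying $C$, and then a further gap $\beta''-\alpha$ to be spent on transporting the result from time $s$ to time $t$ by $S(t,s)$. These demands compete: near $s=0$ the factor $S(t,s)$ must bridge almost all of $[0,t]$ and so needs a large gap $\beta''-\alpha$, whereas near $s=t$ it needs almost none; conversely, the a~priori bound on $\|r_{s,\varepsilon}\|_\beta$ deteriorates as $s\uparrow T(\beta)$, which limits how large $\beta$ may be taken for $s$ close to $t$. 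A fixed chain $\alpha < \beta'' < \beta' < \beta < \alpha^*$ cannot meet both constraints when $t$ is close to $T(\alpha)$. The remedy is the standard Ovcyannikov device of letting the intermediate scales depend on the integration variable: one chooses $s\mapsto\beta(s)$ (and accompanying $\beta'(s),\beta''(s)$) decreasing from a value near $\alpha^*$ at $s=0$ to a value $>\alpha$ at $s=t$, arranged -- using the explicit form of $T(\cdot)$ in (\ref{z15}) and the quadratic gap losses -- so that for every $s$ one has $s<T(\beta(s))$, both $C$-gaps are open, and $t-s$ stays below the Ovcyannikov time for the gap $\beta''(s)-\alpha$. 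Verifying that such a schedule exists on all of $[0,t]$ with the gaps bounded below by a positive constant is where the real work lies; once it is in place, the three bounds of the previous paragraph combine to give $\|r_{t,\varepsilon}-r_t\|_\alpha \le \varepsilon\, K(t)$ with $K(t)$ finite and independent of $\varepsilon$, which yields the uniform convergence (\ref{z160}).
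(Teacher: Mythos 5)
Your proposal is essentially sound, but it takes a genuinely different route from the paper, so a comparison is in order. The paper never writes a Duhamel formula for the difference of the two full solutions. Instead it approximates both $r_{t,\varepsilon}$ and $r_t$ by the partial sums of their Ovcyannikov series, $u_{t,n}$ in (\ref{nB1}) and $v_{t,n}$ in (\ref{nB100}), the approximation being uniform in $\varepsilon$ because the bound (\ref{nB2}) is $\varepsilon$-free, and then telescopes the difference $u_{s,n}-v_{s,n}$: in each product of semigroups and $B$-factors, the factors $S_{\alpha_*,\varepsilon}$ are replaced by $S_{\alpha_*,0}$ one at a time, so that every resulting term carries exactly one factor $Q_\varepsilon(s)=S_{\alpha_*,\varepsilon}(s)-S_{\alpha_*,0}(s)$. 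The $\varepsilon$-smallness then comes from the key estimate (\ref{z162}) of Lemma~\ref{op-lm2}, i.e.\ from the difference of the two sun-dual semigroups (whose generators differ by $\varepsilon C$), and the conclusion follows by a $\delta/3$ triangle argument, cf.\ (\ref{nB200}) and (\ref{z401}). In your version the smallness instead enters through the explicit source term $\varepsilon C r_{s,\varepsilon}$ in a single variation-of-constants identity whose propagator is the solution operator of (\ref{E5}); the series enter only through the $\varepsilon$-uniform a priori bounds and the mapping properties of that operator. Incidentally, the scale schedule you flag as ``the real work'' is in fact a short computation: with $D(\alpha):=\langle a^{+}\rangle+\langle a^{-}\rangle e^{-\alpha}$, the hypothesis $t<T(\alpha)$ gives a positive budget $\rho:=\alpha^*-\alpha-tD(\alpha)$, and the linear schedule $\beta''(s)=\alpha+q(t-s)D(\alpha)+\rho/4$, $\beta(s)=\beta''(s)+\rho/4$, with $q>1$ close to $1$, meets all three constraints with gaps bounded below by $\rho/4$ (the condition $s<T(\beta(s))$ follows since $D(\beta(s))<D(\alpha)$). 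Both proofs ultimately give the same linear rate in $\varepsilon$; yours is shorter and states it explicitly, the paper's is longer but uses only elementary ingredients.

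The step you dismiss as formal is, however, the genuinely delicate one: the identity $r_{t,\varepsilon}-r_t=\varepsilon\int_0^t S(t-s)\,C\,r_{s,\varepsilon}\,ds$ cannot be quoted from standard theory, because $V=A_0+B$ does not generate a $C_0$-semigroup on any fixed $\mathcal{K}_\alpha$: the solution operator of (\ref{E5}) acts only between different members of the scale and only for restricted elapsed times. To justify the formula you must either establish the product rule for $s\mapsto S(t-s)\bigl(r_{s,\varepsilon}-r_s\bigr)$ together with the commutation $VS(\tau)k=S(\tau)Vk$ on the scale, or show that the right-hand side is differentiable (pulling $V$ through the Bochner integral) and solves the same inhomogeneous Cauchy problem with zero initial datum, and then invoke the uniqueness part of Theorem~\ref{op-tm}. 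Either way this is a real argument, not bookkeeping; it is precisely what the paper's organization avoids, since its only Duhamel-type step occurs inside Lemma~\ref{op-lm2} for the genuine $C_0$-semigroups $S_{\alpha,\varepsilon}(t)$ on $\mathcal{A}_{\alpha,\varepsilon}$, where Theorem IX.1.19 of Kato applies directly. With that justification supplied, your proof is complete.
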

Let us now make some comments on these statements.
The condition in (\ref{z14}) can certainly be satisfied if the dispersal kernel decays
faster than the competition kernel. The magnitude parameter $\theta$ determines the large $n$ asymptotics of the initial
correlation function, see (\ref{z400}) and (\ref{z16}). Note that in Theorem \ref{op-tm} we do not require
(\ref{z14}).
 The main characteristic feature of the solutions mentioned in Theorems \ref{2tm} and \ref{op-tm} is that, at a given $t$, they lie
in a  space, $\mathcal{K}_\alpha$, `bigger' than the initial $r_0$ does, cf. (\ref{z31}).  The bigger $t$, the bigger should be the space $\mathcal{K}_\alpha$.
 The function $(- \infty, \alpha^*) \ni \alpha \mapsto T(\alpha)$ defined in (\ref{z15}) is bounded from above by a certain $T^* (\langle a^+ \rangle, \langle a^- \rangle, \alpha^*)$ beyond which the solutions of both problems  could not be extended, see, however, Remark \ref{Decrk} below.

\subsection{The proof of the statements}
\label{OvcS}

The proof is based on three lemmas formulated  below.
\begin{lemma}
  \label{op-lm}
Let $\theta$ be as in (\ref{z14}) and $\alpha \in \mathbb{R}$ be such that $e^{\alpha} \theta <1$, cf. (\ref{z16}). Then, for arbitrary $\varepsilon \in (0,1]$,
there exists a closed subspace, $\mathcal{A}_{\alpha , \varepsilon}\subset \mathcal{K}_\alpha$, and a $C_0$-semigroup of linear contractions, $S_{\alpha, \varepsilon}(t): \mathcal{A}_{\alpha, \varepsilon} \to \mathcal{A}_{\alpha, \varepsilon}$, $t\geq 0$, with generator $A_{\alpha, \varepsilon}$, such that, for each $\alpha' > \alpha$, cf. (\ref{z31}), the following holds:
\vskip.1cm
\begin{itemize}
\item[(a)] $\mathcal{K}_{\alpha'} \subset {\rm Dom} (  A_{\alpha, \varepsilon}) \subset \mathcal{A}_{\alpha, \varepsilon}$;
\item[(b)] for each $k\in \mathcal{K}_{\alpha'}$, $A_{\alpha, \varepsilon} k = (A_0 + \varepsilon C) k$, where the latter two operators are defined in (\ref{22A}) and (\ref{op-1}), (\ref{op-2});
\item[(c)] for each $\alpha''<\alpha$, cf. (\ref{z31}), the restriction of $S_{\alpha'', \varepsilon}$ to
$\mathcal{A}_{\alpha, \varepsilon}$ coincides with $S_{\alpha, \varepsilon}$.
\end{itemize}
\end{lemma}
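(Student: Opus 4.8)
The plan is to exhibit $A_0+\varepsilon C$ as a Hille--Yosida operator by writing down its resolvent explicitly, exploiting that the off-diagonal part of $C$ strictly lowers the number of points of a configuration, and then passing to the closure of its domain to recover strong continuity. Writing $\Psi(\eta)=m|\eta|+\varepsilon E^{-}(\eta)\geq 0$, the operator has the lower-triangular form
\[
((A_0+\varepsilon C)k)(\eta)=-\Psi(\eta)k(\eta)+\varepsilon\sum_{x\in\eta}E^{+}(x,\eta\setminus x)k(\eta\setminus x),
\]
cf. \eqref{22A}, so for $\lambda>0$ the equation $(\lambda-(A_0+\varepsilon C))k=f$ reads $(\lambda+\Psi(\eta))k(\eta)=f(\eta)+\varepsilon\sum_{x\in\eta}E^{+}(x,\eta\setminus x)k(\eta\setminus x)$. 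First I would solve this recursively in $|\eta|$, starting from $k(\emptyset)=f(\emptyset)/\lambda$; this produces, uniquely and pointwise on $\Gamma_0$, a candidate resolvent $R_\lambda f:=k$ for every $f$.

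The key step is the bound $\|R_\lambda\|_{\mathcal{K}_\alpha\to\mathcal{K}_\alpha}\leq 1/\lambda$. Passing to $g(\eta)=k(\eta)e^{\alpha|\eta|}$ and $h(\eta)=f(\eta)e^{\alpha|\eta|}$, so that $\|k\|_\alpha=\|g\|_\infty$ by \eqref{z30}, the recursion acquires an extra factor $e^{\alpha}$ on the off-diagonal term. Assuming inductively that $|g(\xi)|\leq\|h\|_\infty/\lambda$ for all $|\xi|<|\eta|$, and using $\sum_{x\in\eta}E^{+}(x,\eta\setminus x)=E^{+}(\eta)$ together with $a^{+}\leq\theta a^{-}$ (whence $E^{+}(\eta)\leq\theta E^{-}(\eta)$, cf. \eqref{z14}, \eqref{15A}) and $e^{\alpha}\theta<1$ (cf. \eqref{z16}), one gets
\[
\varepsilon e^{\alpha}E^{+}(\eta)\leq\varepsilon e^{\alpha}\theta E^{-}(\eta)\leq\varepsilon E^{-}(\eta)\leq\Psi(\eta),
\]
which closes the induction and yields $\|R_\lambda f\|_\alpha\leq\|f\|_\alpha/\lambda$. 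This is precisely where the two standing hypotheses are consumed, and the same computation is the dissipativity (positive-maximum-principle) statement for $A_0+\varepsilon C$. Since $R_\lambda$ is injective (a zero image forces $f=0$) and is a genuine two-sided inverse of $\lambda-(A_0+\varepsilon C)$ on its range, it satisfies the resolvent identity, so there is a unique closed operator $A$ with $(\lambda-A)^{-1}=R_\lambda$ and $\mathrm{Dom}(A)=\mathrm{Ran}(R_\lambda)$ (independent of $\lambda$), coinciding with $A_0+\varepsilon C$ on its domain, and with $\|\lambda(\lambda-A)^{-1}\|\leq 1$ for every $\lambda>0$.

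For (a)--(b) I would verify $\mathcal{K}_{\alpha'}\subset\mathrm{Dom}(A)$: for $k\in\mathcal{K}_{\alpha'}$ the estimate \eqref{op-44} and its $A_0$-analogue show that the prefactors $|\eta|$, $|\eta|^2$ are absorbed by $e^{-(\alpha'-\alpha'')|\eta|}$, so $(A_0+\varepsilon C)k\in\mathcal{K}_{\alpha''}$ for every $\alpha''\in(\alpha,\alpha')$; hence $f:=(\lambda-(A_0+\varepsilon C))k\in\mathcal{K}_\alpha$ and, by uniqueness of the recursion, $R_\lambda f=k$, giving $k\in\mathrm{Ran}(R_\lambda)$ and $Ak=(A_0+\varepsilon C)k$. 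The main obstacle is strong continuity: because $\mathcal{K}_\alpha$ is an $L^\infty$-type space and $\Psi$ is unbounded, even the multiplication part $e^{-t\Psi}$ fails to be strongly continuous on all of $\mathcal{K}_\alpha$, so $\mathrm{Dom}(A)$ is not dense. I would therefore set $\mathcal{A}_{\alpha,\varepsilon}:=\overline{\mathrm{Dom}(A)}^{\,\mathcal{K}_\alpha}$ and let $A_{\alpha,\varepsilon}$ be the part of $A$ in $\mathcal{A}_{\alpha,\varepsilon}$, i.e. the restriction to $\{k\in\mathrm{Dom}(A):Ak\in\mathcal{A}_{\alpha,\varepsilon}\}$; by the Hille--Yosida theory for (non-densely defined) Hille--Yosida operators, $A_{\alpha,\varepsilon}$ generates a $C_0$-semigroup of contractions $S_{\alpha,\varepsilon}(t)$ on $\mathcal{A}_{\alpha,\varepsilon}$, on which $R_\lambda$ is invariant and $\lambda R_\lambda\to I$ strongly. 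The inclusion $(A_0+\varepsilon C)k\in\mathcal{K}_{\alpha''}\subset\mathrm{Dom}(A)\subset\mathcal{A}_{\alpha,\varepsilon}$ from above then upgrades (a) to $\mathcal{K}_{\alpha'}\subset\mathrm{Dom}(A_{\alpha,\varepsilon})$.

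Finally, for (c) the decisive remark is that the defining recursion for $R_\lambda$ does not involve $\alpha$ at all. For $\alpha''<\alpha$ one still has $e^{\alpha''}\theta\leq e^{\alpha}\theta<1$, and, via $\mathcal{K}_\alpha\hookrightarrow\mathcal{K}_{\alpha''}$ (cf. \eqref{z31}), the two resolvents agree on $\mathcal{A}_{\alpha,\varepsilon}\subset\mathcal{K}_\alpha$. Since the contraction semigroups are recovered from their resolvents by the exponential (Yosida) limit $S(t)u=\lim_{n\to\infty}(\tfrac{n}{t})^{n}R_{n/t}^{\,n}u$, the agreement of the resolvents on $\mathcal{A}_{\alpha,\varepsilon}$ forces $S_{\alpha'',\varepsilon}(t)$ restricted to $\mathcal{A}_{\alpha,\varepsilon}$ to coincide with $S_{\alpha,\varepsilon}(t)$, which completes the plan.
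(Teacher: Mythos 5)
Your proposal is correct, but it takes a genuinely different route from the paper's. The paper never works with $A_0+\varepsilon C$ directly in $\mathcal{K}_\alpha$: it constructs, on the pre-dual space $\mathcal{G}_\alpha=L^1(\Gamma_0,e^{-\alpha|\cdot|}d\lambda)$, the death and birth operators $\widehat{A}^{(1)}_\varepsilon$, $\widehat{A}^{(2)}_\varepsilon$ of (\ref{z111}), verifies the integral positivity condition of Proposition \ref{1lm} --- this is where (\ref{z14}) and (\ref{z16}) enter, via the non-positivity of (\ref{z701}) --- obtains a sub-stochastic semigroup $\widehat{S}_{\alpha,\varepsilon}$ on $\mathcal{G}_\alpha$ from the Thieme--Voigt perturbation theorem \cite{TV}, and then defines $S_{\alpha,\varepsilon}$ as the `sun-dual' semigroup (\ref{z704}): the restriction of the adjoint semigroup to $\mathcal{A}_{\alpha,\varepsilon}=\overline{{\rm Dom}(\widehat{A}^*_\varepsilon)}$, with strong continuity supplied by \cite{Pazy}; claims (a)--(c) are then read off the duality relations. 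You instead exploit the triangular structure of $A_0+\varepsilon C$ (the birth part of $C$ strictly lowers $|\eta|$) to solve the resolvent equation recursively in $|\eta|$, prove $\|\lambda R_\lambda\|_{\alpha\alpha}\leq 1$ by induction --- consuming (\ref{z14}) and (\ref{z16}) in the pointwise bound $\varepsilon e^{\alpha}E^{+}(\eta)\leq\Psi(\eta)$, which is exactly the analogue of (\ref{z701}) --- and then invoke Hille--Yosida theory for the (non-densely defined) operator induced by the pseudo-resolvent, taking its part in $\overline{{\rm Ran}(R_\lambda)}$. The paper's route buys structure: the semigroup comes attached to an explicit positive, sub-stochastic pre-dual semigroup and the pairing $\langle\!\langle\cdot,\cdot\rangle\!\rangle$, which the paper exploits again in proving Lemma \ref{op-lm2}; your route buys economy: no perturbation theorem, no adjoint-semigroup machinery, an explicit recursive resolvent, and claim (c) comes nearly for free because the recursion does not see $\alpha$. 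Two points you should spell out, though neither is a gap: (i) the recursion must respect $\lambda$-null sets so that $R_\lambda$ is well defined on equivalence classes in the $\esssup$-space (\ref{z30}) --- a Fubini argument on each $\Gamma^{(n)}$ settles this; and (ii) claim (c) tacitly requires $\mathcal{A}_{\alpha,\varepsilon}\subset\mathcal{A}_{\alpha'',\varepsilon}$, which in your setting follows from ${\rm Ran}(R^{(\alpha)}_\lambda)=R_\lambda(\mathcal{K}_\alpha)\subset R_\lambda(\mathcal{K}_{\alpha''})={\rm Ran}(R^{(\alpha'')}_\lambda)$ together with the embedding (\ref{z31}) and the fact that $\|\cdot\|_{\alpha''}$-closure is weaker than $\|\cdot\|_{\alpha}$-closure. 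Finally, note that your semigroup need not a priori be the same object as the paper's sun-dual semigroup; since the lemma is purely an existence statement with properties (a)--(c), this is immaterial here, but any later argument that leans on the specific dual construction (as the paper's proof of Lemma \ref{op-lm2} does) would have to be rerun within your framework.
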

 \begin{lemma}
\label{op-lm2}
For $\varepsilon =0$, all the statements of Lemma \ref{op-lm} hold true without any restrictions on $\alpha$. Furthermore, for each $\alpha' >\alpha$, $\alpha'' \leq \alpha$, $\varepsilon \in (0,1]$,  and any $t>0$, it follows that
\begin{equation}
  \label{z162}
\sup_{s\in [0,t]}\|(S_{\alpha'', \varepsilon}(s) - S_{\alpha'', 0}(s))\|_{\alpha'\alpha} \leq \varepsilon t M(\alpha' - \alpha),
\end{equation}
with $M(\varkappa)$  given by the formula
\begin{equation}
  \label{Mkappa}
  M(\varkappa) := \left(\frac{2}{e \varkappa}\right)^2 \left( \|a^{-} \| + \|a^{+} \|e^{\alpha^*} \right).
\end{equation}
\end{lemma}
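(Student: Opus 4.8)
The plan is to prove Lemma \ref{op-lm2} in two stages. The first stage is the $\varepsilon=0$ assertion, which amounts to re-running the construction of Lemma \ref{op-lm} for the operator $A_0$ alone (recall that $V=A_0+B$ but the semigroup in Lemma \ref{op-lm} is generated by $A_0+\varepsilon C$, so at $\varepsilon=0$ the generator is simply $A_0$). Since $(A_0k)(\eta)=-m|\eta|k(\eta)$ is a pure multiplication operator with a nonpositive multiplier, it generates the explicit contraction semigroup $(S_{\alpha,0}(t)k)(\eta)=e^{-mt|\eta|}k(\eta)$ on every $\mathcal{K}_\alpha$; positivity of $m$ and of $|\eta|$ makes $\|S_{\alpha,0}(t)\|\le 1$ with no constraint on $\alpha$, and the consistency property (c) across the scale is immediate because the action does not depend on $\alpha$. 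This is the routine part.

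The second stage, the quantitative bound \eqref{z162}, is the real content. The natural tool is the Duhamel (variation-of-constants) identity relating the two semigroups: writing $A_{\alpha'',\varepsilon}=A_0+\varepsilon C$ and $A_{\alpha'',0}=A_0$, I would start from
\begin{equation}
 \label{duhamel}
S_{\alpha'',\varepsilon}(s) - S_{\alpha'',0}(s) = \varepsilon\int_0^s S_{\alpha'',\varepsilon}(s-u)\,C\,S_{\alpha'',0}(u)\,du,
\end{equation}
so that, since $S_{\alpha'',\varepsilon}$ is a contraction on $\mathcal{A}_{\alpha'',\varepsilon}$ and $\mathcal{K}_\alpha\subset\mathcal{A}_{\alpha'',\varepsilon}$ for $\alpha''\le\alpha$, the operator norm $\|\cdot\|_{\alpha'\alpha}$ (from $\mathcal{K}_{\alpha'}$ to $\mathcal{K}_\alpha$) is controlled by $\varepsilon$ times $s$ times $\sup_{u}\|C\,S_{\alpha'',0}(u)\|_{\alpha'\alpha}$. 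The key estimate is therefore to bound $C$ as an operator from $\mathcal{K}_{\alpha'}$ into $\mathcal{K}_\alpha$ uniformly in the intermediate data. From \eqref{op-44}, $|(Ck)(\eta)|$ carries the factor $|\eta|^2\exp(-(\alpha'-\alpha)|\eta|)\bigl[\|a^-\|+\|a^+\|e^{\alpha'}\bigr]\|k\|_{\alpha'}\exp(-\alpha|\eta|)$; after absorbing $e^{\alpha}$ into the weight so that the bracket reads $\|a^-\|+\|a^+\|e^{\alpha^*}$ (using $\alpha'\le\alpha^*$), the $\mathcal{K}_\alpha$-norm picks up $\sup_{n\ge0} n^2 e^{-(\alpha'-\alpha)n}$.

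The heart of the matter is this last supremum, which produces the constant $M(\varkappa)$ with $\varkappa:=\alpha'-\alpha$. I would optimize $f(t)=t^2 e^{-\varkappa t}$ over $t\ge0$: the maximum is at $t=2/\varkappa$, giving $f(2/\varkappa)=(2/(\varkappa))^2 e^{-2}=(2/(e\varkappa))^2$, which is exactly the prefactor in \eqref{Mkappa}. Bounding the discrete supremum over $n\in\mathbb{N}_0$ by the continuous maximum then yields $M(\varkappa)=(2/(e\varkappa))^2\bigl(\|a^-\|+\|a^+\|e^{\alpha^*}\bigr)$, and combining with \eqref{duhamel} and the contraction property gives $\sup_{s\in[0,t]}\|S_{\alpha'',\varepsilon}(s)-S_{\alpha'',0}(s)\|_{\alpha'\alpha}\le\varepsilon t\,M(\varkappa)$, as claimed.

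The main obstacle I anticipate is justifying the Duhamel identity \eqref{duhamel} rigorously at the level of these non-standard semigroups. The subtlety is that $S_{\alpha'',0}(u)$ maps $\mathcal{K}_{\alpha'}$ into itself, $C$ then maps $\mathcal{K}_{\alpha'}$ into $\mathcal{K}_\alpha$ with a loss of the index, and $S_{\alpha'',\varepsilon}(s-u)$ must act on $\mathcal{K}_\alpha\subset\mathcal{A}_{\alpha'',\varepsilon}$ — so one is genuinely working across the whole scale rather than in a single Banach space, and the integrand must be shown to be well-defined, Bochner-integrable, and to satisfy the identity on the common dense part $\mathcal{K}_{\alpha'}$. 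I would handle this by invoking the consistency property (c) of Lemma \ref{op-lm} to identify the various restricted semigroups on the overlapping subspaces, verifying that for $k\in\mathcal{K}_{\alpha'}$ the map $u\mapsto S_{\alpha'',\varepsilon}(s-u)C S_{\alpha'',0}(u)k$ is continuous into $\mathcal{K}_\alpha$ (using property (b) to differentiate $S_{\alpha'',0}(u)k$ and the semigroup property), and then integrating the derivative $\frac{d}{du}\bigl[S_{\alpha'',\varepsilon}(s-u)S_{\alpha'',0}(u)k\bigr]=\varepsilon S_{\alpha'',\varepsilon}(s-u)C S_{\alpha'',0}(u)k$ from $0$ to $s$. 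Everything else is the elementary calculus optimization above.
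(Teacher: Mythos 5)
Your proposal is correct, and it reaches the bound \eqref{z162} by a route that is close in spirit to, but not identical with, the paper's. The paper does not write the sandwiched Duhamel identity you propose; instead it shows that $u_t=(S_{\alpha,\varepsilon}(t)-S_{\alpha,0}(t))k$ solves the inhomogeneous problem $\frac{d}{dt}u_t=A_{\alpha,0}u_t+\varepsilon S_{\alpha,\varepsilon}(t)Ck$, $u_0=0$, see \eqref{z708}, and then invokes the variation-of-constants theorem from \cite{Kato} to obtain $u_t=\varepsilon\int_0^t S_{\alpha,0}(t-s)S_{\alpha,\varepsilon}(s)Ck\,ds$ --- note that there $C$ hits $k$ \emph{first} and both semigroups stand to its left, whereas in your identity $C$ is sandwiched between the two semigroups; these are genuinely different operator-valued integrands (they coincide only when $C$ commutes with the perturbed semigroup). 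Both representations yield the same estimate, because in either case one semigroup factor is dropped by contractivity, the other acts as a contraction (on $\mathcal{K}_{\alpha'}$, respectively on $\mathcal{A}_{\alpha,\varepsilon}$), and what remains is $\|C\|_{\alpha'\alpha}\le M(\alpha'-\alpha)$, obtained from \eqref{op-44} together with the optimization $\sup_{n} n^2 e^{-\varkappa n}\le \left(2/(e\varkappa)\right)^2$ exactly as in \eqref{z11Aa} and \eqref{op-50}; your computation of the constant matches the paper's. Your sandwiched version even has a technical advantage: the inner factor $S_{\alpha'',0}(u)$ is the explicit multiplication contraction \eqref{nB101}, so it maps $\mathcal{K}_{\alpha'}$ into itself and claim (b) of Lemma \ref{op-lm} applies literally to $S_{\alpha'',0}(u)k$ under the integral, whereas the paper's passage from the first to the second line of \eqref{z708} tacitly identifies $(S_{\alpha,\varepsilon}(t)-S_{\alpha,0}(t))A_0k$ with $A_{\alpha,0}u_t$, a commutation-type step justified only by the reference to \eqref{z750}. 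One small correction to your write-up: the inclusion $\mathcal{K}_\alpha\subset\mathcal{A}_{\alpha'',\varepsilon}$ holds for $\alpha''<\alpha$ but fails for $\alpha''=\alpha$, since by \eqref{z702} the space $\mathcal{A}_{\alpha,\varepsilon}$ is a \emph{proper} closed subspace of $\mathcal{K}_\alpha$. This is harmless: $C$ maps $\mathcal{K}_{\alpha'}$ boundedly into $\mathcal{K}_\beta$ for any intermediate $\beta\in(\alpha'',\alpha')$, and $\mathcal{K}_\beta\subset\mathcal{A}_{\alpha'',\varepsilon}$, so the integrand does lie where $S_{\alpha'',\varepsilon}$ acts and is continuous in $u$ --- precisely the intermediate-space bookkeeping you anticipate in your final paragraph, after which one lets $\beta\downarrow\alpha$ (or bounds $\|\cdot\|_\alpha$ directly) to keep the constant $M(\alpha'-\alpha)$. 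The $\varepsilon=0$ part of your argument coincides with the paper's \eqref{z705}.
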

\begin{corollary}
 \label{1pn}
For each $\alpha' >\alpha$, $\varepsilon \in [0,1]$, and  any $k\in \mathcal{K}_{\alpha'}$, the map
\begin{equation*}
[0, +\infty) \ni t \mapsto S_{\alpha, \varepsilon}(t) k \in \mathcal{K}_\alpha
\end{equation*}
is continuous.
\end{corollary}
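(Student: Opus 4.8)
The plan is to obtain the assertion directly from the $C_0$-property of the semigroups constructed in Lemmas \ref{op-lm} and \ref{op-lm2}, the only point needing care being the chain of embeddings that places $k$ into the carrier space on which the relevant semigroup is strongly continuous. Fix $\alpha' > \alpha$ and $k \in \mathcal{K}_{\alpha'}$. For $\varepsilon \in (0,1]$, Lemma \ref{op-lm} supplies the $C_0$-semigroup $S_{\alpha,\varepsilon}(t)$ acting in the closed subspace $\mathcal{A}_{\alpha,\varepsilon} \subset \mathcal{K}_\alpha$, and part (a) of that lemma gives the inclusions $\mathcal{K}_{\alpha'} \subset {\rm Dom}(A_{\alpha,\varepsilon}) \subset \mathcal{A}_{\alpha,\varepsilon}$; for $\varepsilon = 0$ the identical structure is furnished by Lemma \ref{op-lm2}, now with no restriction on $\alpha$. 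In either case $k \in \mathcal{A}_{\alpha,\varepsilon}$, so that $S_{\alpha,\varepsilon}(t)k$ is well defined and lies in $\mathcal{A}_{\alpha,\varepsilon} \subset \mathcal{K}_\alpha$ for every $t \geq 0$.

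With this in hand, the continuity of $t \mapsto S_{\alpha,\varepsilon}(t)k$ in the norm of $\mathcal{K}_\alpha$ is exactly the strong continuity built into the notion of a $C_0$-semigroup. Concretely, at $t_0 > 0$ I would write, for $h > 0$, $S_{\alpha,\varepsilon}(t_0 + h)k - S_{\alpha,\varepsilon}(t_0)k = S_{\alpha,\varepsilon}(t_0)\left( S_{\alpha,\varepsilon}(h)k - k \right)$ and, for $h < 0$, $S_{\alpha,\varepsilon}(t_0)k - S_{\alpha,\varepsilon}(t_0 + h)k = S_{\alpha,\varepsilon}(t_0 + h)\left( S_{\alpha,\varepsilon}(-h)k - k \right)$; both right-hand sides tend to $0$ as $h \to 0$ by strong continuity at the origin together with the uniform contractivity $\|S_{\alpha,\varepsilon}(s)\|_{\alpha\alpha} \leq 1$, which controls the prefactor. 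Continuity at $t_0 = 0$ is the strong continuity at the origin itself. This already yields the claim.

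A slightly more informative route, which I would record as a remark, exploits that $k$ lies not merely in $\mathcal{A}_{\alpha,\varepsilon}$ but in ${\rm Dom}(A_{\alpha,\varepsilon})$ by (a). Standard $C_0$-semigroup theory then shows that $t \mapsto S_{\alpha,\varepsilon}(t)k$ is continuously differentiable on $[0,\infty)$ with $\frac{d}{dt} S_{\alpha,\varepsilon}(t)k = A_{\alpha,\varepsilon} S_{\alpha,\varepsilon}(t)k = S_{\alpha,\varepsilon}(t) A_{\alpha,\varepsilon} k$, the last equality and the continuity of the derivative following from strong continuity applied to the vector $A_{\alpha,\varepsilon}k \in \mathcal{A}_{\alpha,\varepsilon}$; a fortiori the map is continuous.

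Since no genuine estimation enters, there is no hard analytic obstacle. The one thing to get right is conceptual bookkeeping: the semigroups are a priori only strongly continuous on their own carrier spaces $\mathcal{A}_{\alpha,\varepsilon}$, the continuity is measured in the ambient $\mathcal{K}_\alpha$-norm, which is weaker than the $\mathcal{K}_{\alpha'}$-norm in which $k$ is originally given, and one must invoke Lemma \ref{op-lm} for $\varepsilon \in (0,1]$ and Lemma \ref{op-lm2} for $\varepsilon = 0$. Verifying the inclusion $\mathcal{K}_{\alpha'} \subset \mathcal{A}_{\alpha,\varepsilon}$ from part (a) and tracking which lemma supplies the semigroup for a given $\varepsilon$ is all the proof really requires.
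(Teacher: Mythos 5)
Your proof is correct and is exactly the reasoning the paper intends: Corollary \ref{1pn} is stated without proof because it follows immediately from part (a) of Lemmas \ref{op-lm} and \ref{op-lm2} (giving $k\in\mathcal{K}_{\alpha'}\subset\mathcal{A}_{\alpha,\varepsilon}$, a closed subspace of $\mathcal{K}_\alpha$ whose norm is the $\mathcal{K}_\alpha$-norm) together with the strong continuity and contractivity of the $C_0$-semigroup $S_{\alpha,\varepsilon}(t)$ on that subspace. Your explicit two-sided estimate at $t_0>0$ and the remark on differentiability for $k\in{\rm Dom}(A_{\alpha,\varepsilon})$ are harmless elaborations of the same argument.
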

\begin{remark}
  \label{op-lm-rk}
By claim (c) of Lemma \ref{op-lm}, for $\alpha' > \alpha$ and $\alpha'' \leq \alpha$, each $S_{\alpha'', \varepsilon}(t)$ can be considered as a bounded linear contraction from $\mathcal{K}_{\alpha'}$ to $\mathcal{K}_\alpha$.
\end{remark}
\begin{myproof}{Theorem \ref{2tm}}
For $\alpha' > \alpha$ and a bounded linear operator $Q: \mathcal{K}_{\alpha'} \to \mathcal{K}_\alpha$,
cf. (\ref{z31}), by $\|Q\|_{\alpha'\alpha}$ we denote the corresponding operator norm.
For $\varkappa>0$ and $\eta \in \Gamma_0$, we have
\begin{equation}
  \label{z11Aa}
  |\eta| e^{-\varkappa |\eta|} \leq \frac{1}{ e \varkappa}, \qquad \quad |\eta|^2 e^{-\varkappa |\eta|} \leq \left(\frac{2}{ e \varkappa}\right)^2.
\end{equation}
Let us now fix some $\alpha_* < \alpha^*$ and then set $T_* = T(\alpha_*)$.
For $\alpha', \alpha \in [\alpha_*, \alpha^*]$ such that $\alpha' > \alpha$, the expression for $B$ given in (\ref{22A}) can be used to define a bounded linear operator
$B:\mathcal{K}_{\alpha' } \to \mathcal{K}_\alpha$. We shall keep the notation $B$ for this operator if it is clear between which spaces it acts.
However, additional labels will be used if we deal with more than one such $B$  acting between different spaces.
The norm of $B:\mathcal{K}_{\alpha' } \to \mathcal{K}_\alpha$ can be estimated by means of (\ref{op-45}) and (\ref{z11Aa}) as follows, cf. (\ref{z15}),
\begin{equation}
  \label{nB}
\|B\|_{\alpha'\alpha} \leq \frac{\alpha^* - \alpha_*}{e(\alpha' - \alpha)T_*} .
\end{equation}
In a similar way, we define also bounded operators $A_0, C:\mathcal{K}_{\alpha' } \to \mathcal{K}_\alpha$, for which we get, cf. (\ref{22A}),  (\ref{op-44}), (\ref{Mkappa}), and (\ref{z11Aa}),
\begin{equation}
  \label{op-50}
 \|A_0 \|_{\alpha' \alpha} \leq \frac{m}{e(\alpha' - \alpha)}, \qquad  \|C \|_{\alpha' \alpha} \leq \left(\frac{2 }{e(\alpha' - \alpha)}\right)^2 \left(\|a^{-} \|+ \|a^{+}\|e^{\alpha^*} \right).
\end{equation}
Now we fix $\alpha \in (\alpha_*, \alpha^*)$ and recall that $T(\alpha)$ is defined in (\ref{z15}). Then,
for $r_0\in\mathcal{K}_{\alpha^{*}}$ as in (\ref{RenK}) and $t\in [0,T(\alpha))$,
we define: $u_{t,0} = 0$, and, for  $n \in \mathbb{N}$,
\begin{equation}
  \label{nB1}
u_{t,n}  =  S_{\alpha, \varepsilon} (t) r_0  +  \sum_{l=1}^{n-1} \int_0^t
\int_0^{t_1} \cdots \int_0^{t_{l-1}} U_{\alpha, \varepsilon}^{(l)} (t,t_1, \dots, t_l) r_0 d t_l \cdots dt_1,
\end{equation}
where $t_0=t$ and
\begin{eqnarray}
  \label{nBB1}
U_{\alpha, \varepsilon}^{(l)} (t,t_1, \dots, t_l) & := & S_{\alpha, \varepsilon} (t-t_1)B_1 S_{\alpha, \varepsilon} (t_1 - t_2) B_{2} \\[.2cm] & & \qquad \qquad \qquad \times   \cdots \times   S_{\alpha, \varepsilon} (t_{l-1}-t_l)B_l
 S_{\alpha, \varepsilon} (t_l). \nonumber
\end{eqnarray}
The latter operator is suppose to act from $\mathcal{K}_{\alpha^*}$ to $\mathcal{K}_{\alpha}$. In order to define the action of the operators in
the product,
we fix $q>1$ such that $q t< T (\alpha)$ and introduce
\begin{equation}
  \label{nBB2}
 \alpha_{2p} = \alpha_0 - p \epsilon_1 - p\epsilon_2 , \quad \alpha_{2p+1} = \alpha_0 - (p+1) \epsilon_1 - p\epsilon_2, \quad p= 0, \dots, l,
\end{equation}
where $\alpha_0 = \alpha^*$,  $\alpha_{2l+1} = \alpha$, and
\begin{equation}
  \label{nBB3}
\epsilon_1 = \frac{(q-1) (\alpha^* - \alpha)}{q(l+1)}, \qquad \epsilon_2 = \frac{\alpha^* - \alpha}{ql} .
\end{equation}
Then the operators in the product in (\ref{nBB1}) are set to act as follows, cf. Lemma \ref{op-lm}, (\ref{22A}), and (\ref{nB}),
\begin{eqnarray}
  \label{nBB4}
& & S_{\alpha, \varepsilon}(t_l): \mathcal{K}_{\alpha_0 } \to \mathcal{K}_{\alpha_1 }, \\[.2cm]
& & S_{\alpha, \varepsilon}(t_{l-p}-t_{l-p+1}): \mathcal{K}_{\alpha_{2p} } \to \mathcal{K}_{\alpha_{2p+1} }, \nonumber\\[.2cm]
& & B_{l-p+1} : \mathcal{K}_{\alpha_{2p-1} } \to \mathcal{K}_{\alpha_{2p} }, \qquad p = 1, \dots l. \nonumber
\end{eqnarray}
Since all $S_{\alpha, \varepsilon}(t)$ are contractions, from  (\ref{nB1}) we have
\begin{equation}
  \label{nB2}
\|u_{t, n+1} - u_{t, n}\|_{\alpha} \leq \frac{t^n}{n!} \|B_1 B_{2} \cdots B_n\|_{\alpha^* \alpha} \|r_0\|_{\alpha^*}, \qquad n\in \mathbb{N}.
\end{equation}
Note that the right-hand side of (\ref{nB2}) is independent of $\varepsilon$.
To estimate it we use (\ref{nB}), then (\ref{nBB3}) and the last line in (\ref{nBB4}) with $l=n$, which yields, cf. (\ref{nBB3}),
\begin{equation}
  \label{Bkappa}
\|B_{n-p+1}\|_{\alpha_{2p-1 } \alpha_{2p}} \leq \frac{\alpha^* - \alpha}{e\epsilon_2 T (\alpha)} = \frac{qn}{e  T(\alpha)}, \qquad p = 1, \dots n.
\end{equation}
Applying this in (\ref{nB2}) we obtain
\begin{equation}
  \label{Bkapp}
\sup_{s\in [0,t]}\|u_{s, n+1} - u_{s, n}\|_{\alpha} \leq \frac{1}{n!} \left(\frac{n}{e}\right)^n \left(\frac{ qt}{T(\alpha)}\right)^n,
\end{equation}
which means that $\{u_{s, n}\}_{n\in \mathbb{N}_0}$ is a Cauchy
sequence in each of  $\mathcal{K}_{\alpha'}$, $\alpha'\in [\alpha_*,
\alpha]$, uniformly in $\varepsilon \in (0,1]$ and in  $s\in [0,t]$.
Let $u_s\in \mathcal{K}_{\alpha}$ be its limit. By Corollary
\ref{1pn}, $U_{\alpha, \varepsilon}^{(l)}$ is continuous in $t_1,
\dots t_l$, and hence $u_{t,n} \in \mathcal{K}_{\alpha}$ is
continuously differentiable in $\mathcal{K}_{\alpha}$ on  $t\in [0,T
(\alpha))$. Then, see claim (b) of Lemma \ref{op-lm},
\begin{equation}
 \label{nBa}
 \frac{d}{dt} u_{t,n} = A_{\alpha, \varepsilon} u_{t,n} + B u_{t,n-1} =  (A_0 + \varepsilon C)u_{t,n} + B u_{t,n-1},
\end{equation}
where both $(A_0 + \varepsilon C)$ and $B$ act from
$\mathcal{K}_\alpha$ to $\mathcal{K}_{\alpha_*}$. On the other hand,
\begin{multline*}
\sup_{s\in [0,t]}\|(A_0 + \varepsilon C)(u_{s, n+1} - u_{s, n})\|_{\alpha_*} \\\leq \bigg{(}\|A_0\|_{\alpha \alpha_*} + \|C\|_{\alpha \alpha_*}
\bigg{)} \sup_{s\in [0,t]}\|u_{s, n+1} - u_{s, n}\|_{\alpha},
\end{multline*}
\[\sup_{s\in [0,t]}\|B(u_{s, n+1} - u_{s, n})\|_{\alpha_*} \leq \|B\|_{\alpha \alpha_*} \sup_{s\in [0,t]}\|u_{s, n+1} - u_{s, n}\|_{\alpha},
\]
where the operator norms can be estimated as in (\ref{op-50}).
Hence, by (\ref{Bkapp}) $\{{d}u_{s,n} /{ds}  \}_{n\in \mathbb{N}}$
is a Cauchy sequence in $\mathcal{K}_{\alpha_*}$, uniformly in $s\in
[0,t]$. Therefore,  the limiting $u_s \in \mathcal{K}_\alpha \subset
\mathcal{K}_{\alpha_*}$ is continuously differentiable on $[0,t]$,
and
\[
{d}u_{s,n} /{ds} \to {d}u_{s} /{ds}, \qquad n \to +\infty.
\]
On the other hand, the right-hand side of (\ref{nBa}) converges in
$\mathcal{K}_{\alpha_*}$ to ${L}^\Delta_{\varepsilon, {\rm ren}}
u_s$, see (\ref{21A}). Hence, $u_s$ is the classical solution on
$[0,t]$, see Definition \ref{1zdf} and Remark \ref{1zrk}. Since such $t$ can be arbitrary in $(0,T_*)$, this yields
the existence of the solution in question.

Let us now prove the uniqueness. In view of the linearity of the
problem in (\ref{RenK}), it is enough to prove that the zero function
is its only solution corresponding to the zero initial condition.
Let $\alpha \in (\alpha_*, \alpha^*)$ be fixed, and let $t\in (0,
T_*)$ be such that $t< T(\alpha)$, see (\ref{z15}). Then the problem
in (\ref{RenK}) with the zero initial condition has classical
solutions in $\mathcal{K}_{\alpha}$ on $[0, t]$. Each of them solves
also the following integral equation, cf. (\ref{nB1}),
\begin{equation}
  \label{NOV}
u_t = \int_0^t S_{\alpha, \varepsilon}(t-s) B u_s ds,
\end{equation}
see the proof of Theorem IX.1.19, page 486 in \cite{Kato}. Since $t< T(\alpha)$, there exists $\alpha' \in (\alpha_*, \alpha)$ such that
also $t < T(\alpha')$. Note that
$u_t$ in the left-hand side of (\ref{NOV}) lies in $\mathcal{K}_{\alpha'}$, cf. (\ref{z31}). Then the meaning of (\ref{NOV}) is the following. As a bounded operator, cf (\ref{nB}), $B$ maps $u_s \in \mathcal{K}_{\alpha}$ to $Bu_s \in \mathcal{K}_{\alpha''}$, $\alpha'' \in (\alpha', \alpha)$, continuously in $s$. Furthermore, in view of claim (c) of Lemma \ref{op-lm}, $S_{\alpha, \varepsilon}$ in (\ref{NOV}) can be replaced by $S_{\alpha', \varepsilon}$. Then, by Corollary \ref{1pn}, the integrand in (\ref{NOV}) is continuous in $s$, and hence the integral makes sense.
Now we iterate (\ref{NOV}) $n$ times and get, cf. (\ref{nB1}) and (\ref{nBB1}),
\begin{eqnarray*}
 u_t & = & \int_0^t S_{\alpha', \varepsilon}(t-t_1) B_1  \int_0^{t_1} S_{\alpha', \varepsilon}(t_1- t_2) B_{2} \times \\[.2cm]
 & \times & \cdots \times  \int_0^{t_n} S_{\alpha', \varepsilon}(t_{n-1}- t_n) B_n u_{t_n} d t_n \cdots dt_1,
  \nonumber
\end{eqnarray*}
from which we deduce the following estimate, cf (\ref{nB2}),
\begin{equation}
  \label{NOV3}
\|u_t\|_{\alpha'} \leq \frac{t^n}{n!} \|B_1 B_{2} \cdots B_n\|_{\alpha\alpha'} \sup_{s\in [0,t]}\|u_s \|_{\alpha}.
\end{equation}
Recall that $[0,t]\ni s \mapsto u_s \in \mathcal{K}_{\alpha}$ is continuous. Similarly as in the case of (\ref{nB2}) we then get
\[
\|B_1 B_{2} \cdots B_n\|_{\alpha\alpha'} \leq \left(\frac{n}{e}\right)^n \left[ \frac{\alpha^* - \alpha}{T(\alpha) (\alpha - \alpha')}\right]^n.
\]
Hence, by picking large enough $n$ the right-hand side of (\ref{NOV3}) can be made arbitrarily small whenever
\[
t < \left(\frac{\alpha - \alpha'}{\alpha^* - \alpha}\right) T(\alpha).
\]
This proves that $u_t$ is the zero element of $\mathcal{K}_{\alpha'}$ for small enough $t>0$. Since $u_t$
lies  in $\mathcal{K}_{\alpha}\subset \mathcal{K}_{\alpha'}$, then $u_t$ is also the zero element of $\mathcal{K}_{\alpha}$.
Now we repeat the above procedure on the interval $[t, 2t]$ and obtain that $u_{2t}$ is also the zero element of $\mathcal{K}_{\alpha}$.
The extension of this fact to each $t< T_*$  then follows.
\end{myproof}
\begin{remark}
  \label{Ovcrk}
The proof of the uniqueness has been done in the spirit of Ovcyannikov's method,
see pages 16 and 17 in \cite{trev}. Here, however, we deal with operators which
cannot be directly accommodated to the Ovcyannikov scheme, see the second estimate in (\ref{op-50}).
We have managed to take these operators into account through the `sun-dual' semigroup mentioned in Lemma \ref{op-lm}.
\end{remark}
\begin{myproof}{Theorem \ref{op-tm}}
By the same procedure as in the proof of Theorem \ref{2tm} we obtain that the sequence defined recursively, cf. (\ref{nB1}),  by
\begin{eqnarray}
  \label{nB100}
v_{t,n}  =  S_{\alpha_*,0} (t) r_0 & + & \sum_{l=1}^{n-1} \int_0^t  \int_0^{t_1} \cdots \int_0^{t_{l-1}} S_{\alpha_*, 0} (t-t_1)B_1 \\[.2cm]
& \times & S_{\alpha_*, 0} (t_1-t_2)B_{2} \cdots S_{\alpha_*, 0} (t_{l-1}-t_l)B_l S_{\alpha_*, 0}(t_l) r_0 d t_l \cdots dt_1, \nonumber
\end{eqnarray}
converges in $\mathcal{K}_{\alpha_*}$  to a classical solution of (\ref{E5}). In contrast to the case of $\varepsilon >0$,
the semigroup $S_{\alpha, 0}$ can be constructed explicitly: cf. claim (b) of Lemma \ref{op-lm},
\begin{equation}
\label{nB101}
(S_{\alpha, 0}(t) v)(\eta) = \exp(-t m |\eta|) v(\eta), \qquad v \in \mathcal{K}_\alpha,
\end{equation}
which allows for omitting (\ref{z16}) in this case. The rest of the proof goes  exactly as in  Theorem \ref{2tm}.
\end{myproof}
\begin{myproof}{Theorem \ref{op1-tm}}
Let $r_{t, \varepsilon}$ be the limit of the sequence $\{u_{t,n}\}_{n\in \mathbb{N}}$ defined in (\ref{nB1}). We prove that this $r_{t, \varepsilon}$ converges to $r_t$, as stated in the theorem.
Given $\delta>0$ and $t\in (0,T_*)$, let $n_\delta\in \mathbb{N}$ be such that, for all $n> n_\delta$, both following estimates hold
\begin{equation}
  \label{nB200}
\sup_{s\in [0,t]} \|u_{s, n} - r_{s, \varepsilon}\|_{\alpha_*} < \delta/3, \qquad  \sup_{s\in [0,t]} \|v_{s, n} - r_{s}\|_{\alpha_*} < \delta/3,
\end{equation}
where $v_{s,n}$ is defined in (\ref{nB100}). Note that
the first estimate in (\ref{nB200}) is uniform in $\varepsilon\in (0,1]$ as the right-hand side of (\ref{nB2}) is $\varepsilon$-independent.
By (\ref{nB1}) and (\ref{nB100}), we get
\begin{eqnarray}
  \label{nB201}
 u_{s,n} - v_{s,n} & = & Q_{ \varepsilon}(s) r_0\\[.2cm] &  + & \sum_{l=1}^{n-1} \int_{0}^s \int_{0}^{s_1} \cdots \int_0^{s_{l-1}} \sum_{p=1}^{l+1}
R^{(p,l)}_{\varepsilon}(s, s_1, \dots, s_l)r_0 ds_l \cdots ds_1, \nonumber
\end{eqnarray}
where
\begin{equation*}
Q_{\varepsilon}(s):= S_{\alpha_*, \varepsilon}(s) - S_{\alpha_*, 0}(s) ,
\end{equation*}
\begin{equation}
  \label{nB203}
R^{(0,l)}_{\varepsilon}(s, s_1, \dots, s_l):= S_{\alpha_*, \varepsilon}(s-s_1) B_1 \cdots  S_{\alpha_*, \varepsilon}(s_{l-1}-s_l)B_{l} Q_{ \varepsilon}(s_l),
\end{equation}
and, for $p=1, \dots , l$,
\begin{eqnarray}
  \label{nB204}
& & R^{(p,l)}_{\varepsilon}(s, s_1, \dots, s_l):=  S_{\alpha_*, \varepsilon}(s-s_1) B_1  \cdots S_{\alpha_*, \varepsilon}(s_{l-p-1}-s_{l-p}) B_{l-p} \\[.2cm] & &  \qquad \qquad \times Q_{\varepsilon}(s_{l-p}-s_{l-p+1}) B_{l-p+1} S_{\alpha_*, 0}(s_{l-p+1}-s_{l-p+2})B_{l-p+1} \nonumber \\[.2cm] & & \qquad \qquad  \times \cdots \times S_{\alpha_*, 0}(s_{l-1}-s_l)B_l S_{\alpha_*, 0}(s_l).\nonumber
\end{eqnarray}
In (\ref{nB204}), the operators $S_{\alpha_*, \varepsilon}$ and $B_p$ act as in (\ref{nBB4}). Then taking into account that $\|S_{\alpha_*, \varepsilon}(s)\|_{\alpha'\alpha} \leq 1$, see Lemma \ref{op-lm}, and likewise $\|S_{\alpha_*, 0}(s)\|_{\alpha'\alpha} \leq 1$, cf. (\ref{nB101}), for a fixed $s\in (0,T_*)$ and $q>1$ such that $q s< T_*$, we obtain from (\ref{nB203}), (\ref{nB204}) and from (\ref{z162}), (\ref{Mkappa}), (\ref{nBB2}), (\ref{nBB3}), (\ref{nBB4}), and (\ref{Bkappa}) that
\begin{gather*}
 \|R^{(0,l)}_{\varepsilon}(s, s_1, \dots, s_l)\|_{\alpha^* \alpha_*} \leq \varepsilon K (l+1)^2 s_l \|B_1\cdots B_l\|_{\alpha^* \alpha_*} \\ \leq \varepsilon K (l+1)^2 s_l \left(\frac{ql}{e T_*}\right)^l,
\end{gather*}
and likewise for $p=1, \dots, l$,
\begin{equation*}
  \|R^{(p,l)}_{\varepsilon}(s, s_1, \dots, s_l)\|_{\alpha^* \alpha_*} \leq \varepsilon K (l+1)^2 (s_{l-p} - s_{l-p+1}) \left(\frac{ql}{e T_*}\right)^l,
\end{equation*}
where
\[
K := \left[ \frac{2 q}{e(q-1)(\alpha_* - \alpha_*)  }\right]^2.
\]
Applying both latter estimates in (\ref{nB201}) we finally get that, for $t\in [0,T_*)$ and $q>1$ such that $qt<T_*$ , the following holds
\begin{equation}
  \label{z401}
\sup_{s\in [0,t]}\|u_{s,n} - v_{s,n}\|_{\alpha_*} \leq \varepsilon \|r_0\|_{\alpha^*} \varphi_q (t),
\end{equation}
where
\[
\varphi_q(t) := t K \sum_{l=0}^\infty \frac{1}{l!} \left(\frac{l}{e}\right)^l (l+1)^2 \left(\frac{qt}{T_*}\right)^l.
\]
Now we fix $n>n_\delta$ such that both estimates in (\ref{nB200}) hold, independently of $\varepsilon$. Next, for this fixed $n$, we pick $\varepsilon$ such that also the left-hand side of (\ref{z401}) is less than $\delta/3$, which by the triangle inequality yields (\ref{z160}).
\end{myproof}

\subsection{The proof of Lemmas \ref{op-lm} and \ref{op-lm2}}
\label{op-lmS}
The semigroups in question will be obtained with the help of the corresponding semigroups constructed in the pre-dual spaces.

For a given $\alpha\in \mathbb{R}$, the space $\mathcal{K}_\alpha$ defined in (\ref{z300}), (\ref{z30}) is dual to the Banach space
$\mathcal{G}_\alpha := L^1 (\Gamma_0 , e^{-\alpha |\cdot|} d \lambda)$,
with norm
\begin{equation*}
\|G\|_{\alpha} = \int_{\Gamma_0} |G(\eta)| \exp(- \alpha |\eta|) \lambda (d \eta)= \sum_{n=0}^\infty \frac{1}{n!} e^{-n \alpha} \|G^{(n)}\|_{L^1((\mathbb{R}^d)^n)}.
\end{equation*}
 The duality is defined by the pairing $\langle \! \langle G, k \rangle \! \rangle$ as in (\ref{19A}).
For $G: \Gamma_0 \to \mathbb{R}$, we set
\begin{gather}\label{z111}
(\widehat{A}^{(1)}_{\varepsilon} G) (\eta) =   -(|\eta|m + \varepsilon E^{-}(\eta)) G(\eta):= - E(\eta)G(\eta), \\[.2cm] (\widehat{A}^{(2)}_{\varepsilon} G) (\eta) =   \varepsilon\int_{\mathbb{R}^d} E^{+} (y,\eta) G (\eta\cup y) dy, \nonumber
\end{gather}
and
\begin{equation}
  \label{z1}
  \widehat{A}_\varepsilon  = \widehat{A}^{(1)}_\varepsilon + \widehat{A}^{(2)}_\varepsilon.
\end{equation}
For $\alpha$ as in Lemma \ref{op-lm}, we define
\begin{eqnarray}
 \label{z6}
\mathcal{D}_\alpha(\widehat{A}^{(1)}_\varepsilon)& = & \{G\in \mathcal{G}_\alpha : E(\cdot) G(\cdot) \in \mathcal{G}_\alpha\},\\[.2cm]
\mathcal{D}_\alpha(\widehat{A}^{(2)}_\varepsilon)& = & \{G\in \mathcal{G}_\alpha : E^{+}(\cdot) G(\cdot) \in \mathcal{G}_\alpha\},\nonumber
\end{eqnarray}
where $E^{\pm}(\eta)$ are given in (\ref{15A}). Now we use (\ref{z111}) to define the corresponding operators in $\mathcal{G}_\alpha$.
As a multiplication operator, $\widehat{A}^{(1)}_\varepsilon: \mathcal{G}_\alpha \to \mathcal{G}_\alpha$ with ${\rm Dom}(\widehat{A}^{(1)}_\varepsilon) = \mathcal{D}_\alpha(\widehat{A}^{(1)}_\varepsilon)$ is closed. By (\ref{12A}), for  $G\in \mathcal{D}_\alpha(\widehat{A}^{(2)}_\varepsilon)$, we get
\begin{eqnarray}
 \label{z7}
\|\widehat{A}^{(2)}_\varepsilon G\|_\alpha & \leq & \varepsilon \int_{\Gamma_0} \int_{\mathbb{R}^d} E^{+}(y, \eta) |G(\eta\cup y)|e^{-\alpha|\eta|}dy \lambda(d\eta)\\[.2cm]
& = & e^\alpha \varepsilon \int_{\Gamma_0}  |G(\eta)| e^{-\alpha|\eta|} \left(\sum_{x\in \eta} E^{+} (x, \eta \setminus x) \right) \lambda ( d \eta)
\nonumber\\[.2cm]
& = & e^\alpha\varepsilon  \|E^{+} (\cdot) G(\cdot)\|_\alpha .\nonumber
\end{eqnarray}
Hence,  $\widehat{A}^{(2)}_\varepsilon: \mathcal{G}_\alpha \to \mathcal{G}_\alpha$ with ${\rm Dom}(\widehat{A}^{(2)}_\varepsilon) = \mathcal{D}_\alpha(\widehat{A}^{(2)}_\varepsilon)$ is well-defined.

We say that $G\in \mathcal{G}_\alpha$ is positive if $G(\eta)\geq 0$ for $\lambda$-almost all $\eta \in \Gamma_0$.
Let $\mathcal{G}_\alpha^{+}$ be the cone of all positive
$G\in \mathcal{G}_\alpha$. An operator $(Q, {\rm Dom} Q)$ on
$\mathcal{G}_\alpha$
is said to be positive if $Q: {\rm Dom}( Q )\cap \mathcal{G}_\alpha^{+} \to \mathcal{G}_\alpha^{+}$.
A semigroup of operators $S(t): \mathcal{G}_\alpha \to \mathcal{G}_\alpha$, $t\geq 0$, is called {\it sub-stochastic} if each $S(t)$ is positive and
$\|S(t) G\|_{\alpha} \leq \|G\|_\alpha$ for all $G\in \mathcal{G}_\alpha$.
The proof of Lemma \ref{op-lm} is based on Theorem 2.2 of \cite{TV} which we formulate here
in the following form.
\begin{proposition}
  \label{1lm}
Let $(Q_0, {\rm Dom}(Q_0))$ be the generator of a $C_0$-semigroup of positive operators on $\mathcal{G}_\alpha$, and let $Q_1 : {\rm Dom}(Q_0)\to \mathcal{G}_\alpha$ be positive and such that, for all
$G\in {\rm Dom} (Q_0) \cap \mathcal{G}_\alpha^{+}$,
\begin{equation*}
\int_{\Gamma_0} ((Q_0 + Q_1)G)(\eta) \exp(- \alpha |\eta|) \lambda (d \eta) \leq 0.
\end{equation*}
Then, for all $\varkappa \in (0,1)$, the operator $(Q_0 + \varkappa Q_1, {\rm Dom}(Q_0))$ is the generator of a sub-stochastic semigroup on
$\mathcal{G}_\alpha$.
\end{proposition}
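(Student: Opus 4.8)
The plan is to verify the hypotheses of the Hille--Yosida generation theorem for positive contraction semigroups, exploiting that $\mathcal{G}_\alpha = L^1(\Gamma_0, e^{-\alpha|\cdot|}d\lambda)$ is an $L^1$-type (AL-)space: on such a space the operator norm of a positive operator $T$ is attained on the cone, $\|T\|_\alpha = \sup\{\|TG\|_\alpha : G\in \mathcal{G}_\alpha^{+},\ \|G\|_\alpha\le 1\}$, because $|TG|\le T|G|$ pointwise. The central tool is the linear functional
\[
\phi(G) := \int_{\Gamma_0} G(\eta)\, e^{-\alpha|\eta|}\,\lambda(d\eta),
\]
which satisfies $\phi(G)=\|G\|_\alpha$ for every $G\in\mathcal{G}_\alpha^{+}$ and in terms of which the standing hypothesis reads $\phi((Q_0+Q_1)G)\le 0$ on ${\rm Dom}(Q_0)\cap\mathcal{G}_\alpha^{+}$. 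First I would note that this hypothesis already forces $\phi(Q_0 G)\le -\phi(Q_1 G)\le 0$, so the positive $C_0$-semigroup generated by $Q_0$ is in fact contractive; hence every $\mu>0$ lies in $\rho(Q_0)$ and $(\mu-Q_0)^{-1}$ is a positive bounded operator with $\|\mu(\mu-Q_0)^{-1}\|_\alpha\le 1$. Since $Q_1:{\rm Dom}(Q_0)\to\mathcal{G}_\alpha$ is positive, $B_\mu:=Q_1(\mu-Q_0)^{-1}$ is a well-defined positive bounded operator. To bound its norm take $G\in\mathcal{G}_\alpha^{+}$, put $F:=(\mu-Q_0)^{-1}G\in{\rm Dom}(Q_0)\cap\mathcal{G}_\alpha^{+}$, so $Q_0 F=\mu F-G$, and apply $\phi$:
\[
\|Q_1 F\|_\alpha=\phi(Q_1 F)\le -\phi(Q_0 F)=\|G\|_\alpha-\mu\|F\|_\alpha\le\|G\|_\alpha,
\]
which by the AL-property gives $\|B_\mu\|_\alpha\le 1$.

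Next comes the Neumann-series construction of the perturbed resolvent. Because $\varkappa<1$, the operator $\varkappa B_\mu$ has norm $<1$, so $(I-\varkappa B_\mu)^{-1}=\sum_{n\ge 0}\varkappa^n B_\mu^n$ converges in operator norm and is positive; I would then set $\mathcal{R}_\mu:=(\mu-Q_0)^{-1}(I-\varkappa B_\mu)^{-1}$ and check, through the standard algebraic identities, that $(\mu-(Q_0+\varkappa Q_1))\mathcal{R}_\mu=I$ and $\mathcal{R}_\mu(\mu-(Q_0+\varkappa Q_1))=I$ on ${\rm Dom}(Q_0)$. Thus $\mu\in\rho(Q_\varkappa)$ with $(\mu-Q_\varkappa)^{-1}=\mathcal{R}_\mu\ge 0$, where $Q_\varkappa:=Q_0+\varkappa Q_1$ with ${\rm Dom}(Q_\varkappa)={\rm Dom}(Q_0)$; in particular $Q_\varkappa$ is closed, having nonempty resolvent set. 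To obtain the contraction bound I repeat the $\phi$-computation with the true coefficient: for $G\in\mathcal{G}_\alpha^{+}$ put $F_\varkappa:=\mathcal{R}_\mu G\ge 0$, so $Q_\varkappa F_\varkappa=\mu F_\varkappa-G$, and then
\[
\mu\|F_\varkappa\|_\alpha-\|G\|_\alpha=\phi(Q_0 F_\varkappa)+\varkappa\|Q_1 F_\varkappa\|_\alpha\le(\varkappa-1)\|Q_1 F_\varkappa\|_\alpha\le 0,
\]
where the first inequality uses $\phi(Q_0 F_\varkappa)\le-\|Q_1 F_\varkappa\|_\alpha$ from the hypothesis. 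Hence $\|\mu\mathcal{R}_\mu G\|_\alpha\le\|G\|_\alpha$ for all $G\ge 0$.

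Finally I would bootstrap from the single power to all powers: since $\mathcal{R}_\mu$ is positive, iterating the last bound yields $\|\mathcal{R}_\mu^{\,n}G\|_\alpha\le\mu^{-n}\|G\|_\alpha$ for every $G\in\mathcal{G}_\alpha^{+}$, and the AL-property upgrades this to $\|(\mu-Q_\varkappa)^{-n}\|_\alpha\le\mu^{-n}$ for all $n\in\mathbb{N}$ and $\mu>0$. Together with the density of ${\rm Dom}(Q_0)$ (inherited from the generator $Q_0$) and the closedness just noted, the Hille--Yosida theorem produces a $C_0$-semigroup of contractions generated by $Q_\varkappa$; positivity of the resolvent passes to the semigroup through the exponential formula $S(t)G=\lim_{n\to\infty}\bigl(\tfrac{n}{t}\,\mathcal{R}_{n/t}\bigr)^{n}G$, so the semigroup is positive and contractive, i.e.\ sub-stochastic, as claimed.

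I expect the main obstacle to be the bookkeeping that certifies $\mathcal{R}_\mu$ as the genuine resolvent $(\mu-Q_\varkappa)^{-1}$ — in particular keeping every intermediate element inside ${\rm Dom}(Q_0)\cap\mathcal{G}_\alpha^{+}$, where alone the dissipativity hypothesis and the positivity of $(\mu-Q_0)^{-1}$ may be invoked — rather than the estimates themselves, which are short once $\phi$ and the AL-structure are available. The restriction $\varkappa<1$ plays a twofold role that should be emphasized: it makes $\|\varkappa B_\mu\|_\alpha<1$ so the Neumann series converges, and it supplies the sign in $(\varkappa-1)\|Q_1 F_\varkappa\|_\alpha\le 0$ that yields contractivity directly, with no further honesty or conservativeness analysis required.
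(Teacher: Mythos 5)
Your proposal is correct, but there is nothing in the paper to measure it against: the paper never proves Proposition \ref{1lm}. It is imported, stated in exactly this form, as Theorem 2.2 of Thieme and Voigt \cite{TV}, and is then used as a black box in the proof of Lemma \ref{op-lm}. What you have reconstructed is essentially the classical Kato--Voigt resolvent-perturbation argument on which that reference rests: extract contractivity of the unperturbed semigroup from the sign condition, bound $B_\mu = Q_1(\mu-Q_0)^{-1}$ on the positive cone via the functional $\phi$, sum the Neumann series (this is where $\varkappa<1$ enters), re-run the $\phi$-estimate with the coefficient $\varkappa$ to get the contraction bound $\|\mu\mathcal{R}_\mu\|_\alpha\le 1$, and finish with Hille--Yosida plus the exponential formula for positivity. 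Two steps are glossed over but are routine: (i) contractivity of the $Q_0$-semigroup on \emph{all} of $\mathcal{G}_\alpha$, not just on positive domain elements, requires differentiating $t\mapsto\phi(T(t)G)$ along orbits of $G\in{\rm Dom}(Q_0)\cap\mathcal{G}_\alpha^{+}$ and then the density of ${\rm Dom}(Q_0)\cap\mathcal{G}_\alpha^{+}$ in $\mathcal{G}_\alpha^{+}$, which follows from the positivity of $\mu(\mu-Q_0)^{-1}$ for large $\mu$ together with $\mu(\mu-Q_0)^{-1}G\to G$; (ii) the left-inverse identity $\mathcal{R}_\mu(\mu-Q_0-\varkappa Q_1)F=F$ on ${\rm Dom}(Q_0)$ is cleanest if you substitute $H=(\mu-Q_0)F$ and observe that $(\mu-Q_0-\varkappa Q_1)F=(I-\varkappa B_\mu)H$, which also settles injectivity. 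With these details written out, your argument is a complete and faithful proof of the quoted result, and in fact slightly sharpens the statement by showing that the hypotheses force the $Q_0$-semigroup to be itself sub-stochastic.
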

\begin{myproof}{Lemma \ref{op-lm}}
The operator $\widehat{A}_\varepsilon^{(1)}$ defined in (\ref{z111}) and (\ref{z6}) generates a positive semigroup on $\mathcal{G}_\alpha$. For  a  $\varkappa \in (0,1)$, by (\ref{12A}) we have, cf. (\ref{z7}),
\begin{eqnarray}
\label{z701}
& & \int_{\Gamma_0} \left((\widehat{A}_\varepsilon^{(1)} + \varkappa^{-1} \widehat{A}_\varepsilon^{(2)}  )G(\eta)\right)\exp(-\alpha |\eta|) \lambda (d \eta)\\[.2cm]
& & \quad = - m \int_{\Gamma_0} |\eta| G(\eta) \exp(-\alpha |\eta|) \lambda (d \eta) - \varepsilon \int_{\Gamma_0} E^{-}(\eta) G(\eta) \exp(-\alpha |\eta|) \lambda (d \eta)  \nonumber \\[.2cm]
& & \quad + \frac{\varepsilon}{\varkappa}\int_{\Gamma_0} \int_{\mathbb{R}^d} E^{+} (y, \eta) G(\eta \cup y) \exp(-\alpha |\eta|) \lambda (d \eta) \nonumber \\[.2cm]
& & \quad = - m \int_{\Gamma_0} |\eta| G(\eta) \exp(-\alpha |\eta|) \lambda (d \eta)  \nonumber \\[.2cm]
& & \quad - \varepsilon \int_{\Gamma_0} \left[E^{-} (\eta) - \varkappa^{-1} e^{\alpha} E^{+} (\eta) \right]G(\eta)\exp(-\alpha |\eta|) \lambda (d \eta).
\nonumber \end{eqnarray}
Since $e^\alpha \theta <1$, we can pick $\varkappa\in (0,1)$ such that also $e^\alpha \theta\varkappa^{-1} <1$. For this $\varkappa$, by (\ref{z14}) we have $\left[E^{-} (\eta) - \varkappa^{-1} e^{\alpha} E^{+} (\eta) \right] \geq 0$ for $\lambda$-almost all $\eta$, which means that the left-hand side of
(\ref{z701}) is non-positive for such $\varkappa$.  We also have that $\varkappa^{-1}\widehat{A}_\varepsilon^{(2)}$ is  positive and defined on the domain of $\widehat{A}_\varepsilon^{(1)}$ by (\ref{z7}). Then, by Proposition \ref{1lm},  the operator $\widehat{A}_\varepsilon = \widehat{A}_\varepsilon^{(1)} + \varkappa (\varkappa^{-1} \widehat{A}_\varepsilon^{(2)})$, cf. (\ref{z1}), is the generator of a sub-stochastic semigroup on $\mathcal{G}_\alpha$, which we denote by $\widehat{S}_{\alpha, \varepsilon}(t)$, $t\geq 0$. Note that this in particular means
\begin{equation}
  \label{z1600}
 \|\widehat{S}_{\alpha, \varepsilon}(t) G\|_\alpha \leq \|G\|_\alpha, \qquad  t >0, \quad  G \in \mathcal{G}_\alpha.
\end{equation}
For a fixed $t>0$, let $\widehat{S}^*_{\alpha, \varepsilon}(t)$ be the operator adjoint  to $\widehat{S}_{\alpha, \varepsilon}(t)$.
All such operators constitute a semigroup on $\mathcal{K}_\alpha$, which, however is not strongly continuous as the space $\mathcal{K}_\alpha$
is of $L^\infty$-type. Let
$\widehat{A}_\varepsilon^*$ be the adjoint to $\widehat{A}_\varepsilon$. Its domain is, cf. (\ref{19A}),
\begin{equation}
 \label{z32}
{\rm Dom}(\widehat{A}_\varepsilon^*) =\bigl\{k\in \mathcal{K}_\alpha:  \exists
\tilde{k}\in \mathcal{K}_\alpha \ \  \forall G\in {\rm Dom}(\widehat{A}_\varepsilon)  \ \ \langle\! \langle \widehat{A}_\varepsilon G, k\rangle \!\rangle =
\langle\! \langle G, \tilde{k}\rangle\!\rangle\bigr\}.
\end{equation}
By $\mathcal{A}_{\alpha, \varepsilon}$ we denote the closure of (\ref{z32}) in $\mathcal{K}_\alpha$.
From the very definition in (\ref{z32}) by (\ref{12A})
it follows that
\begin{equation}
  \label{z702}
\forall \alpha'> \alpha\qquad \quad \mathcal{K}_{\alpha'} \subset {\rm Dom}(\widehat{A}_\varepsilon^*) \subset \overline{{\rm Dom}(\widehat{A}_\varepsilon^*)}  =:\mathcal{A}_{\alpha, \varepsilon},
\end{equation}
and
\begin{equation}
  \label{z703}
 \forall k \in \mathcal{K}_{\alpha'} \qquad \quad \widehat{A}_\varepsilon^* k= (A_0 + \varepsilon C) k.
\end{equation}
Note that $\mathcal{A}_{\alpha, \varepsilon}$ is a proper subspace of $\mathcal{K}_\alpha$. Now, for $t>0$, we set
\begin{equation}
  \label{z704}
  S_{\alpha, \varepsilon}(t) = \widehat{S}^*_{\alpha, \varepsilon}(t)\left\vert_{\mathcal{A}_{\alpha, \varepsilon}} \right..
\end{equation}
By Theorem 10.4, page 39 in \cite{Pazy},
the collection $\{S_{\alpha, \varepsilon}(t)\}_{t\geq 0}$ constitutes a $C_0$-semigroup on $\mathcal{A}_{\alpha, \varepsilon}$, called sometimes `sun-dual' to $\{\widehat{S}_{\alpha, \varepsilon}(t)\}_{t\geq 0}$. Its generator
$A_{\alpha, \varepsilon}$ is the part of $\widehat{A}_\varepsilon^*$ in $\mathcal{A}_{\alpha, \varepsilon}$, that is, the restriction of
$\widehat{A}^*_{\alpha, \varepsilon}$
to the set
\begin{equation}
 \label{z34a}
{\rm Dom} (A_{\alpha, \varepsilon}):= \{ k\in {\rm Dom}(\widehat{A}_\varepsilon^*): \widehat{A}_\varepsilon^*k \in \mathcal{A}_{\alpha, \varepsilon}\}.
\end{equation}
By (\ref{op-44}) and (\ref{op-45}), it can be shown that, for any $\alpha'' \in (\alpha ,\alpha')$, both $A_0$ and $C$ act as bounded operators from $\mathcal{K}_{\alpha'}$ to $\mathcal{K}_{\alpha''}$. Therefore, $(A_0 + \varepsilon C)k \in \mathcal{K}_{\alpha''} \subset \mathcal{A}_{\alpha , \varepsilon}$, and hence
\begin{equation}
  \label{z34b}
  \mathcal{K}_{\alpha'} \subset {\rm Dom} (A_{\alpha, \varepsilon}).
\end{equation}
Thus, the objects introduced in (\ref{z703}), (\ref{z704}), and (\ref{z34a}) have the properties stated in the lemma,
cf. (\ref{z702}) and (\ref{z34b}).
\end{myproof}
\vskip.1cm
\begin{myproof}{Lemma \ref{op-lm2}}
For $\varepsilon =0$, we have, cf. (\ref{z111}) and (\ref{z1}),
$\widehat{A}_0 = \widehat{A}_0^{(1)}$, where the latter is the multiplication operator by $|\cdot|$. Hence, the operator, cf. (\ref{22A}),
\begin{equation}
  \label{z705}
(A_{\alpha,0} k)(\eta) = - m |\eta|k(\eta) , \quad \mathcal{A}_{\alpha,0} = {\rm Dom} (A_{\alpha,0}) := \{k\in \mathcal{K}_\alpha: |\cdot| k \in \mathcal{K}_\alpha\}
\end{equation}
is the generator of the semigroup of $S_{\alpha, 0}(t)$, $t\geq 0$, defined by
\[
(S_{\alpha, 0}(t) k)(\eta) = \exp( - t m |\eta|) k(\eta).
\]
Clearly, for any $\varepsilon \in (0,1)$,
\begin{equation}
  \label{z750}
  \mathcal{A}_{\alpha,\varepsilon} \subset \mathcal{A}_{\alpha,0}.
\end{equation}
Let us show now that (\ref{z162}) holds. For $k\in \mathcal{K}_{\alpha'}$, by (\ref{z703}) and (\ref{z705}), we have
\begin{equation}
  \label{z706}
 (A_{\alpha, \varepsilon } - A_{\alpha, 0 })k = \varepsilon C k.
\end{equation}
For such $k$, we set
\begin{equation}
  \label{z707}
 u_{t} =  (S_{\alpha, \varepsilon}(t) - S_{\alpha, 0}(t))k.
\end{equation}
Then $u_0 =0$ and, cf. (\ref{z706}),
\begin{eqnarray}
  \label{z708}
\frac{d}{dt} u_{t} & = & S_{\alpha, \varepsilon}(t) A_{\alpha, \varepsilon} k - S_{\alpha, 0}(t) A_{\alpha, 0} k \\[.2cm]
& = & \varepsilon S_{\alpha, \varepsilon}(t) C k + A_{\alpha, 0}u_{t}. \nonumber
\end{eqnarray}
In the latter line, we have taken into account also (\ref{z750}). By (\ref{op-44}), one can define $C$ as a bounded linear operator
$C:\mathcal{K}_{\alpha'} \to \mathcal{K}_{\alpha''}$ for $\alpha'' \in (\alpha , \alpha')$. Then $Ck\in {\rm Dom} (A_{\alpha, \varepsilon})$, cf. (\ref{z34b}), and hence
\[
[0, + \infty) \ni t \mapsto \varphi_t := S_{\alpha, \varepsilon}(t) C k \in \mathcal{K}_\alpha
\]
is continuously differentiable in $\mathcal{K}_\alpha$ on $[0,+\infty)$. In view of (\ref{z1600}),
\[
\|S_{\alpha, \varepsilon}(t) \| \leq 1, \qquad {\rm for} \ \ {\rm all} \ \ t\geq 0 \ \ {\rm and} \ \ \varepsilon\in [0,1].
\]
Then
\begin{equation}
  \label{z34c}
\|\varphi_t\|_{\alpha} \leq \|C\|_{\alpha'\alpha} \|k\|_{\alpha'}.
\end{equation}
By Theorem 1.19, page 486 in \cite{Kato}, we have from the second line in (\ref{z708})
\begin{equation*}
  u_t = \varepsilon \int_0^t S_{\alpha, 0}(t-s) \varphi_s ds,
\end{equation*}
which by (\ref{z34c}) yields
\[
\sup_{t\in [0,T]} \|u_t\|_{\alpha} \leq \varepsilon T \|C\|_{\alpha'\alpha} \|k\|_{\alpha'}.
\]
Then (\ref{z162}) and (\ref{Mkappa}) follow from the latter estimate by (\ref{z707}) and (\ref{op-50}).
\end{myproof}

\section{The Kinetic Equation}
\label{KSec}

\subsection{Solving the equation}
\label{Solsec}

For the model which we consider, the kinetic equation is the following Cauchy problem in $L^\infty (\mathbb{R}^d)$, cf. (\ref{V-eqn-gen}),
\begin{equation}
  \label{K1}
\frac{d}{dt}\varrho_t = - m \varrho_t -  ( a^{-}\ast \varrho_t )\varrho_t + (a^{+}\ast  \varrho_t ), \qquad \varrho_t|_{t=0} = \varrho_0.
\end{equation}
Here, for an appropriate function $\varrho: \mathbb{R}^d \to \mathbb{R}$, we write
\begin{equation}
  \label{K2}
(a^{\pm}\ast \varrho ) (x) = \int_{\mathbb{R}^d} a^{\pm}(x-y) \varrho(y) dy = \int_{\mathbb{R}^d} \varrho(x-y)  a^{\pm}(y) dy,
\end{equation}
where $a^{\pm}$ are the kernels as in (\ref{Ra20}). The main peculiarity of (\ref{K1}) is that the solution of
(\ref{E5}) can be sought in the form
\begin{equation}
 \label{C239}
r_t (\eta) = e(\varrho_t, \eta):= \prod_{x\in \eta} \varrho_t(x),
\end{equation}
where $\varrho_t\in L^\infty (\mathbb{R}^d)$ is a solution of (\ref{K1}). Denote
\begin{eqnarray}
  \label{Va1}
 \varDelta^+ & = & \{  \varrho \in L^\infty (\mathbb{R}^d): \varrho (x) \geq 0 \ \ {\rm for} \ \ {\rm a. a.} \ \ x\}, \\[.2cm]
 \varDelta_b & = & \{ \varrho \in L^\infty (\mathbb{R}^d) :   \|\varrho \|_{L^\infty(\mathbb{R}^d)} \leq b\}, \qquad b>0,\nonumber \\[.2cm]
 \varDelta^{+}_b & = & \varDelta^{+} \cap \varDelta_b. \nonumber
\end{eqnarray}
\begin{lemma}
  \label{Vlm}
Let $\alpha^*$, $\alpha< \alpha^*$, and $T(\alpha)$ be as in Theorem \ref{2tm}. Set $b_0 = \exp(-\alpha^*)$ and $b = \exp(-\alpha)$.
Suppose that, for $t\in [0,T(\alpha))$, the problem in (\ref{K1}) with $\varrho_0 \in \varDelta^{+}_{b_0}$,
has a unique classical solution $\varrho_s\in \varDelta^{+}_{b}$ on $[0,t]$.
Then  the unique solution $r_s\in \mathcal{K}_{\alpha}$ of (\ref{E5}) with
$r_0 (\eta ) = e(\varrho_0 , \eta)$, see Theorem \ref{op-tm},  is given by (\ref{C239}).
\end{lemma}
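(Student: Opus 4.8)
The plan is to exhibit $r_s := e(\varrho_s,\cdot)$ as a classical solution of \eqref{E5} with initial value $e(\varrho_0,\cdot)$, and then to invoke the uniqueness granted by Theorem \ref{op-tm} to identify it with the solution produced there. First I would record the membership facts that make this meaningful. Since $\varrho_0\in\varDelta^{+}_{b_0}$ with $b_0=e^{-\alpha^*}$ we have $|e(\varrho_0,\eta)|\le b_0^{|\eta|}=e^{-\alpha^*|\eta|}$, so $e(\varrho_0,\cdot)\in\mathcal{K}_{\alpha^*}$ and the initial datum lies in the space required by Theorem \ref{op-tm}; likewise $\varrho_s\in\varDelta^{+}_{b}$ with $b=e^{-\alpha}$ gives $e(\varrho_s,\cdot)\in\mathcal{K}_{\alpha}$ with $\|e(\varrho_s,\cdot)\|_\alpha\le 1$.

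The computational heart of the argument is a Leibniz-type identity for $V=A_0+B$ acting on product functions. Using $e(\varrho,\eta\cup y)=\varrho(y)e(\varrho,\eta)$ and $e(\varrho,(\eta\setminus x)\cup y)=\varrho(y)e(\varrho,\eta\setminus x)$ in the expressions \eqref{22A}, carrying out the $y$-integrations against $a^{\pm}$, and then factoring $e(\varrho,\eta)=\varrho(x)e(\varrho,\eta\setminus x)$, I expect to arrive at
\begin{equation*}
(V e(\varrho,\cdot))(\eta)=\sum_{x\in\eta}\Bigl[-m\varrho(x)-(a^{-}\ast\varrho)(x)\varrho(x)+(a^{+}\ast\varrho)(x)\Bigr]e(\varrho,\eta\setminus x).
\end{equation*}
On the other hand, the pointwise time derivative is $\frac{d}{ds}e(\varrho_s,\eta)=\sum_{x\in\eta}\dot\varrho_s(x)\,e(\varrho_s,\eta\setminus x)$, and the kinetic equation \eqref{K1} identifies the bracket above, evaluated at $\varrho_s$, precisely with $\dot\varrho_s(x)$. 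Hence $\frac{d}{ds}e(\varrho_s,\cdot)=V e(\varrho_s,\cdot)$ holds pointwise for almost all $\eta$.

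The hard part is to upgrade these pointwise relations to genuine statements in $\mathcal{K}_\alpha$: one must verify that $e(\varrho_s,\cdot)\in\mathcal{D}_\alpha(V)$ and that the pointwise derivative is the $\mathcal{K}_\alpha$-norm derivative, depending continuously on $s$. The bound $\|\varrho_s\|_{L^\infty}\le e^{-\alpha}$ by itself is \emph{not} enough, because both $A_0$ and $B$ produce a factor $|\eta|$ (for instance $|(A_0 e(\varrho_s,\cdot))(\eta)|e^{\alpha|\eta|}\le m|\eta|(\|\varrho_s\|_{L^\infty}e^{\alpha})^{|\eta|}$), which is absorbed only when $\|\varrho_s\|_{L^\infty}e^{\alpha}<1$ strictly. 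I would extract this strict bound from the equation itself: since $-m\varrho_s-(a^{-}\ast\varrho_s)\varrho_s\le 0$ pointwise, \eqref{K1} gives $\dot\varrho_s\le a^{+}\ast\varrho_s\le\langle a^{+}\rangle\|\varrho_s\|_{L^\infty}$, whence by a Gr\"onwall argument $\|\varrho_s\|_{L^\infty}\le b_0\,e^{\langle a^{+}\rangle s}$. Because $T(\alpha)\le(\alpha^*-\alpha)/\langle a^{+}\rangle$ by \eqref{z15}, this forces $\|\varrho_s\|_{L^\infty}<e^{-\alpha}$ for every $s<T(\alpha)$, and on a compact $[0,t]\subset[0,T(\alpha))$ one even gets a uniform $\alpha_1>\alpha$ with $e(\varrho_s,\cdot)\in\mathcal{K}_{\alpha_1}$. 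By Remark \ref{1zrk} this places $e(\varrho_s,\cdot)$ in $\mathcal{D}_\alpha(V)$; the gap $\alpha_1-\alpha$ absorbs the offending $|\eta|$-factors, so that $V e(\varrho_s,\cdot)\in\mathcal{K}_\alpha$ is continuous in $s$ and the difference quotients converge in $\mathcal{K}_\alpha$ to the pointwise derivative.

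With these two ingredients, $s\mapsto e(\varrho_s,\cdot)$ is a classical solution of \eqref{E5} on $[0,T(\alpha))$ sharing the initial datum of the solution supplied by Theorem \ref{op-tm}, and the uniqueness asserted there yields $r_s=e(\varrho_s,\cdot)$, which is exactly \eqref{C239}. I expect the domain-and-continuity step of the third paragraph to be the genuine obstacle; once the strict sub-threshold bound $\|\varrho_s\|_{L^\infty}<e^{-\alpha}$ is secured, the remaining estimates are the routine ones already carried out in \eqref{op-44}, \eqref{op-45} and \eqref{z11Aa}.
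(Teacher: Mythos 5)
Your proof takes exactly the paper's route: exhibit $\tilde r_s=e(\varrho_s,\cdot)$ as a classical solution of \eqref{E5} by differentiating the product ansatz and using \eqref{K1}, then invoke the uniqueness of Theorem \ref{op-tm} to conclude $r_s=e(\varrho_s,\cdot)$. The only difference is one of detail: the paper dismisses the verification as ``easily checked,'' while your third paragraph (the strict bound $\|\varrho_s\|_{L^\infty(\mathbb{R}^d)}<e^{-\alpha}$ obtained via Gr\"onwall together with \eqref{z15}, which places $e(\varrho_s,\cdot)$ in some $\mathcal{K}_{\alpha_1}$ with $\alpha_1>\alpha$ and hence in $\mathcal{D}_\alpha(V)$) correctly supplies the domain-membership step that the paper leaves implicit.
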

\begin{proof}
First of all we note that, for a given $\alpha$, $e(\varrho, \cdot) \in \mathcal{K}_\alpha$ if and only if $\varrho \in \varDelta_b$ with $b= e^{-\alpha}$,
see (\ref{z300}) and (\ref{z30}). Now set $\tilde{r}_t = e(\varrho_t, \cdot)$ with $\varrho_t$ solving (\ref{K1}).
This $\tilde{r}_t$ solves (\ref{E5}), which can easily be checked by computing $d/dt$ and
employing the equation in (\ref{K1}). In view of the uniqueness as in Theorem \ref{op-tm},
we then have $\tilde{r}_s = r_s$ on $[0,t]$, from which it can be continued to $[0,T(\alpha))$ by repeating the same arguments on the interval $[t, 2t]$,
etc.
\end{proof}
As
(\ref{C239}) is the correlation function of the Poisson measure $\pi_{\varrho_t}$, the above lemma establishes the so called {\it chaos preservation} or {\it chaos propagation} in time as the most chaotic states are those corresponding to Poisson measures.
Now let us turn to solving (\ref{K1}).
\begin{theorem}
  \label{R1tm}
For arbitrary $\varrho_0 \in \varDelta^+$, the problem in (\ref{K1}) has a unique classical solution $\varrho_t \in \varDelta^{+}$ on $[0,+\infty)$.
\end{theorem}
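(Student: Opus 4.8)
The plan is to treat (\ref{K1}) as an ordinary differential equation in the Banach space $L^\infty(\mathbb{R}^d)$ and to solve it by a contraction argument built on an integral (Duhamel) reformulation whose structure makes the nonnegativity of solutions manifest. The key preliminary observation is that the right-hand side of (\ref{K1}),
\[
\Phi(\varrho) := -m\varrho - (a^{-}\ast\varrho)\varrho + a^{+}\ast\varrho,
\]
is locally Lipschitz on $L^\infty(\mathbb{R}^d)$: by (\ref{K2}) and (\ref{14A}) the convolution operators $\varrho \mapsto a^{\pm}\ast\varrho$ are bounded on $L^\infty(\mathbb{R}^d)$ with norms $\leq \langle a^{\pm}\rangle$, while the quadratic map $\varrho \mapsto (a^{-}\ast\varrho)\varrho$ is Lipschitz on every ball.

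First I would establish local existence, uniqueness and positivity simultaneously. Absorbing the \emph{full} death rate $m + (a^{-}\ast\varrho_t)(x) \geq 0$ into an integrating factor, any solution must satisfy the fixed-point equation
\[
\varrho_t(x) = e^{-\int_0^t [m+(a^{-}\ast\varrho_s)(x)]\,ds}\,\varrho_0(x) + \int_0^t e^{-\int_s^t [m+(a^{-}\ast\varrho_u)(x)]\,du}\,(a^{+}\ast\varrho_s)(x)\,ds.
\]
Denote by $\mathcal{F}$ the map sending $\varrho_\cdot$ to the right-hand side. On the space $C([0,\tau];\varDelta^{+}_{b_1})$, cf. (\ref{Va1}), with $b_1$ slightly larger than $\|\varrho_0\|_{L^\infty(\mathbb{R}^d)}$ and $\tau$ small, $\mathcal{F}$ maps the set into itself and is a contraction by the Lipschitz bounds above; crucially, because $a^{\pm}\geq 0$ by (\ref{AA}), $\mathcal{F}$ sends nonnegative functions to nonnegative functions. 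Its unique fixed point is the local classical solution and lies in $\varDelta^{+}$. Uniqueness on a common interval follows from the Lipschitz property via a Gronwall estimate applied to the difference of two solutions, using that both are bounded.

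Next I would derive the a priori bound that upgrades the local solution to a global one; this is where positivity does the essential work. Since $\varrho_t \geq 0$ and $a^{-}\ast\varrho_t \geq 0$, the competition term $-(a^{-}\ast\varrho_t)\varrho_t$ is nonpositive, so dropping it in the integral representation gives
\[
0 \leq \varrho_t(x) \leq e^{-mt}\varrho_0(x) + \langle a^{+}\rangle \int_0^t e^{-m(t-s)}\|\varrho_s\|_{L^\infty(\mathbb{R}^d)}\,ds.
\]
Writing $b(t) := \|\varrho_t\|_{L^\infty(\mathbb{R}^d)}$ and applying Gronwall's inequality yields $b(t) \leq \|\varrho_0\|_{L^\infty(\mathbb{R}^d)}\, e^{(\langle a^{+}\rangle - m)t}$, which is finite on every bounded interval. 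Hence the $L^\infty$-norm cannot blow up in finite time, and the standard continuation criterion extends the local solution to all of $[0,+\infty)$, with $\varrho_t \in \varDelta^{+}$ for each $t$.

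The main obstacle is the interplay between the quadratic nonlinearity and the required nonnegativity: a naive Duhamel formulation based only on the $-m\varrho$ part would leave the competition term as an inhomogeneity of indefinite sign and therefore would not manifestly preserve the positive cone. Absorbing the entire death rate $m + a^{-}\ast\varrho_t$ into the integrating factor is the device that both keeps $\mathcal{F}$ cone-preserving and simultaneously exhibits the competition as a purely dissipative mechanism, so that the linear birth term alone controls the growth. Once positivity is secured, global existence, uniqueness and the bound all follow routinely, and the constructed $\varrho_t$ is precisely the solution whose propagation of chaos is asserted in Lemma \ref{Vlm}.
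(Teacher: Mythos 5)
Your proposal is correct and takes essentially the same route as the paper's own proof: the identical Duhamel reformulation in which the \emph{full} death rate $m+(a^{-}\ast\varrho_t)$ is absorbed into the integrating factor (the paper's equation (\ref{301u})), a Banach fixed-point argument in the positive part of $C([0,T];L^\infty(\mathbb{R}^d))$, and the same Gronwall a priori bound $\|\varrho_t\|_{L^\infty(\mathbb{R}^d)}\leq \|\varrho_0\|_{L^\infty(\mathbb{R}^d)}e^{(\langle a^{+}\rangle-m)t}$ (the paper's (\ref{34})) feeding a continuation argument. The only cosmetic difference is that the paper first rescales $u_t=e^{-\epsilon t}\varrho_t$ with $\epsilon=\max\{0,\langle a^{+}\rangle-m\}$ so that a ball of \emph{fixed} radius is invariant under the fixed-point map, whereas you use a slightly enlarged ball and a small time step; both devices accomplish the same thing.
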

\begin{proof}
For a certain $\epsilon \geq 0$, let us consider
\begin{equation}
  \label{300}
  u_t (x) = e^{-\epsilon t} \varrho_t (x), \qquad t\geq 0, \ \  x \in \mathbb{R}^d.
\end{equation}
Then $\varrho_t$ solves (\ref{K1}) if and only if $u_t$ solves the following problem
\begin{equation}
  \label{300u}
\frac{d}{dt} u_t = - (  m+\epsilon) u_t - e^{\epsilon t} (a^{-}\ast u_t) u_t  + (a^{+}\ast u_t),  \quad u_t|_{t=0} = \varrho_0.
\end{equation}
By integrating this differential problem we get the integral equation
\begin{eqnarray}
  \label{301u}
 u_t & = & \varrho_0 \exp\left(- (m+\epsilon) t - \int_0^t e^{\epsilon \tau}(a^{-}\ast u_\tau)d \tau \right) \\[.2cm]
 & + & \int_0^t (a^{+}\ast u_\tau) \exp\left(- (m+\epsilon) (t-\tau) - \int_\tau^t e^{\epsilon s}(a^{-}\ast u_s)d s \right) d \tau, \nonumber
\end{eqnarray}
which we will consider in the Banach space $\mathcal{C}_T$ of all continuous maps $u: [0,T]\to L^\infty(\mathbb{R}^d)$ with norm
\[
\|u\|_T := \sup_{t\in [0,T]} \|u_t\|_{L^\infty(\mathbb{R}^d)}.
\]
Here $T>0$ is a fixed parameter, which we choose later together with $\epsilon$. Then (\ref{301u}) can be written in the form
$ u = F(u)$, and hence the solution of (\ref{301u}) is a fixed point of the map defined by the right-hand side of this equation.
Set
$\mathcal{C}_T^+ = \{ u \in \mathcal{C}_T: \forall t\in [0,T]  \ u_t \in \varDelta^+\}$.
By (\ref{301u}) we have that, for each $u\in \mathcal{C}_T^+$ and for all $t\in [0,T]$, the following holds
\[
\|F(u)_t\|_{L^\infty(\mathbb{R}^d)} \leq\|\varrho_0 \|_{T} \exp[-
t(m +\epsilon)] + \|u\|_T \frac{\langle a^+\rangle}{m + \epsilon}
\bigg{(}1 - \exp[- t(m +\epsilon)] \bigg{)},
\]
where we consider $\varrho_0$ as a constant map from $[0,T]$ to
$L^\infty (\mathbb{R}^d)$. Now we set $\epsilon=0$ if $\langle
a^+\rangle \leq m$, and $\epsilon = \langle a^+\rangle - m$
otherwise. Then $\|F(u)_t\|_{L^\infty(\mathbb{R}^d)} \leq b$, and
hence $\|F(u)\|_{T} \leq b$ whenever $\max\{\|\varrho_0 \|_{T};
\|u\|_T \} \leq b$. Therefore, $F$ maps the positive part of a
centered at zero ball in $\mathcal{C}_T$  into itself. Let us now
show that $F$ is a contraction on such sets whenever  $T$ is small
enough. By means of the inequality
\[
\left\vert e^{-\alpha} -  e^{-\beta}  \right\vert \leq |\alpha -
\beta|, \qquad \alpha,\beta\geq 0,
\]
for fixed $b>0$ and $t\in [0,T]$, and
for $\varrho_0, u, \tilde{u} \in \mathcal{C}_T^+ $ such that
$\|{\varrho}_0\|_T, \|u\|_T, \|\tilde{u}\|_T \leq b$, we obtain from
(\ref{301u})
\begin{gather}
 \label{33}
\|F(u)_t - F(\tilde{u})_t \|_{L^\infty(\mathbb{R}^d)} \leq b
e^{-(m+\epsilon)t} \int_0^t e^{\epsilon \tau} \|U^{-}_\tau -
\tilde{U}^{-}_\tau\|_{L^\infty(\mathbb{R}^d)} d \tau \\[.2cm]
+ \int_0^t e^{-(m+\epsilon)(t-\tau)}\|U^{+}_\tau -
\tilde{U}^{+}_\tau\|_{L^\infty(\mathbb{R}^d)} d \tau \nonumber \\[.2cm] + \int_0^t
e^{-(m+\epsilon)(t-\tau)}\|\tilde{U}^{+}_\tau\|_{L^\infty(\mathbb{R}^d)}
\left(\int_\tau^t e^{\epsilon s} \|U^{-}_s -
\tilde{U}^{-}_s\|_{L^\infty(\mathbb{R}^d)} d s\right) d \tau,
\nonumber
\end{gather}
where $U^{\pm}_s := (a^{\pm}\ast u_s)$ , $\tilde{U}^{\pm}_s :=
(a^{\pm}\ast \tilde{u}_s)$, and hence, for all $s \in [0,T]$, we
have that
\[
\max\{\|U^{\pm}_s\|_{L^\infty(\mathbb{R}^d)};
\|\tilde{U}^{\pm}_s\|_{L^\infty(\mathbb{R}^d)} \} \leq b \langle
a^{\pm} \rangle.
\]
We use this in (\ref{33}) and obtain
\begin{gather*}
\|F(u)_t - F(\tilde{u})_t \|_{L^\infty(\mathbb{R}^d)} \leq b \left[
e^{-(m+\epsilon)t} + \frac{\langle a ^{+} \rangle}{m +
\epsilon}\left( 1 - e^{-(m+\epsilon)t}\right) \right] \\[.2cm] \times \int_0^t
e^{\epsilon \tau} \|U^{-}_\tau -
\tilde{U}^{-}_\tau\|_{L^\infty(\mathbb{R}^d)} d \tau + \frac{\langle
a ^{+} \rangle}{m + \epsilon}\left( 1 - e^{-(m+\epsilon)t}\right)
\|u - \tilde{u}\|_T.
\end{gather*}
The latter estimate yields
\begin{gather*}
\|F(u) - F(\tilde{u}) \|_T \leq q(T)\|u - \tilde{u}\|_T, \\[.2cm] q(T):=
b\langle a^{-} \rangle \int_0^T e^{\epsilon s} ds + \left( 1 -
e^{-(m+\epsilon)T}\right), \nonumber
\end{gather*}
where we have also taken into account that $\langle a ^{+} \rangle
\leq m + \epsilon$ due to our choice of $\epsilon$. Thus, for small
enough $T$, $F$ is a contraction, which yields that: (a)  the equations  in (\ref{300u}) and (\ref{301u}) are equivalent; (b) (\ref{301u}) has
a unique solution, $u\in \mathcal{C}_T^+$, such that $\|u\|_T \leq b$.  Now by
means of (\ref{300}) we return to the problem in (\ref{K1}) and
obtain that it has a positive solution $\varrho_t \in
L^\infty(\mathbb{R}^d)$, $t>0$ such that
\begin{equation}
 \label{34}
\|\varrho_s\|_{L^\infty(\mathbb{R}^d)} \leq \|\varrho_0\|_{L^\infty(\mathbb{R}^d)}\exp\left(- s(m- \langle a^{+} \rangle)\right), \quad s\in [0,t].
\end{equation}
Indeed, for $v_s:= e^{ms}\|\varrho_s\|_{L^\infty(\mathbb{R}^d)}$ from (\ref{300}) and (\ref{301u}) we get
\[
v_t \leq v_0 + \langle a^{+} \rangle \int_0^t v_s ds,
\]
which by the Gronwall inequality yields (\ref{34}).
The proof is completed.
\end{proof}
\subsection{Properties of the solution}
\label{Propsec}

In order to get additional tools for studying the solutions of  (\ref{K1}),
from now on we assume that the initial
conditions are taken from the set $C_{\rm b}
(\mathbb{R}^d)$ of bounded continuous functions $\phi: \mathbb{R}^d
\to \mathbb{R}$. Then the solution $\varrho_t$ will also belong to
$C_{\rm b} (\mathbb{R}^d)$ as this set is closed in
$L^\infty(\mathbb{R}^d)$ whereas the map in right-hand side of
(\ref{301u}) leaves it invariant. Thus, we can consider (\ref{K1})
in the Banach space obtained by equipping $C_{\rm b} (\mathbb{R}^d)$
with the supremum norm. Note that also the map $\phi \to
(a^{\pm}\ast \phi)$ leaves this spaces invariant -- by Lebesgue's
dominated convergence theorem this follows from the second equality
in (\ref{K2}). By $\tilde{\varDelta}$, $\tilde{\varDelta}^+$, and
$\tilde{\varDelta}^+_b$ we denote the intersections of the
corresponding sets defined in (\ref{Va1}) with $C_{\rm b}
(\mathbb{R}^d)$. Our main task is to understand which
properties of the model parameters, see (\ref{R20}) and
(\ref{Ra20}), imply that the solution in question is globally
bounded. If $m\geq \langle a^{+} \rangle$ then
$\|\varrho_t\|_{L^\infty(\mathbb{R}^d)} \leq
\|\varrho_0\|_{L^\infty(\mathbb{R}^d)}$ for all $t>0$, see
(\ref{34}). Thus, it is left to consider the case of $m < \langle
a^{+} \rangle$, in which $\varrho_t$ has exponential grows in $t$ if
$a^{-} \equiv 0$.

The following alternative situations are studied separately, cf. (\ref{z14}):
\begin{eqnarray}
  \label{separat}
& {\rm (i )} & \quad \exists  \theta >0  \quad a^{+}(x) \leq \theta a^{-}(x) \ \ {\rm for} \ {\rm a.a.}  \ x \in \mathbb{R}^d,\\[.2cm]
& {\rm (i i)} & \quad \Upsilon_\theta:= \left\{ x\in\mathbb{R}^d\,:\, a^{+}(x) > \theta a^{-}(x)\right\} \text{ of positive meas.}  \nonumber
\end{eqnarray}
\begin{theorem}
  \label{K2tm}
Let ${\rm (i)}$ in (\ref{separat})  hold. Then, for all $t>0$, the solution of (\ref{K1}) with $\varrho_0 \in C_{\rm b}(\mathbb{R}^d)$
lies in $\tilde{\varDelta}^+_{b}$ for some $b>0$. Furthermore, if
 $\varrho_0 \in \tilde{\varDelta}^+_{\theta-\delta}$, for some $\delta>0$, then  $\varrho_t \in \tilde{\varDelta}^+_\theta$ for all $t> 0$.
\end{theorem}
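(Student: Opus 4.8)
The plan is to bypass the crude bound (\ref{34}), which grows in $t$ once $m<\langle a^{+}\rangle$, and instead to extract a genuine gain from the competition by working with the variation-of-constants (Duhamel) form of the solution. By Theorem \ref{R1tm} and the remarks preceding this subsection, the classical solution $\varrho_t$ of (\ref{K1}) with $\varrho_0\in C_{\rm b}(\mathbb{R}^d)$ exists on $[0,+\infty)$, is nonnegative, and stays in $C_{\rm b}(\mathbb{R}^d)$. Fixing $x$ and regarding $t\mapsto\varrho_t(x)$ as the solution of a scalar linear ODE with nonnegative coefficient $m+(a^{-}\ast\varrho_t)(x)$ and nonnegative forcing $(a^{+}\ast\varrho_t)(x)$ — equivalently, the $\epsilon=0$ instance of (\ref{301u}) — the integrating-factor method gives
\[
\varrho_t(x)=\varrho_0(x)\Phi_{0,t}(x)+\int_0^t (a^{+}\ast\varrho_\tau)(x)\,\Phi_{\tau,t}(x)\,d\tau,\qquad \Phi_{\tau,t}(x):=\exp\Bigl(-\int_\tau^t\bigl[m+(a^{-}\ast\varrho_s)(x)\bigr]\,ds\Bigr),
\]
where $0<\Phi_{\tau,t}(x)\le 1$ because the exponent is nonpositive.

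The crucial step is to tame the forcing integral by means of hypothesis (i) in (\ref{separat}). Since $\varrho_s\ge 0$ and $a^{+}(z)\le\theta a^{-}(z)$ a.e., the convolution inequality $(a^{+}\ast\varrho_\tau)(x)\le\theta(a^{-}\ast\varrho_\tau)(x)$ holds pointwise. I then use the telescoping identity $(a^{-}\ast\varrho_\tau)(x)\,\Phi_{\tau,t}(x)=\tfrac{\partial}{\partial\tau}\Phi_{\tau,t}(x)-m\,\Phi_{\tau,t}(x)$, which upon integration over $\tau\in[0,t]$ yields
\[
\int_0^t (a^{-}\ast\varrho_\tau)(x)\,\Phi_{\tau,t}(x)\,d\tau=1-\Phi_{0,t}(x)-m\int_0^t\Phi_{\tau,t}(x)\,d\tau\le 1-\Phi_{0,t}(x).
\]
Substituting both bounds into the representation produces the single master estimate
\[
\varrho_t(x)\le\theta+\bigl(\varrho_0(x)-\theta\bigr)\Phi_{0,t}(x),\qquad t\ge 0,\ x\in\mathbb{R}^d,
\]
from which both assertions of the theorem fall out.

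For the second statement, $\varrho_0\in\tilde{\varDelta}^{+}_{\theta-\delta}$ gives $\varrho_0(x)-\theta\le-\delta<0$, and since $\Phi_{0,t}(x)\ge 0$ the correction term is nonpositive, so $\varrho_t(x)\le\theta$; combined with the positivity and continuity already noted, this says $\varrho_t\in\tilde{\varDelta}^{+}_{\theta}$ for all $t>0$. For the first statement, the same estimate and $\Phi_{0,t}(x)\le 1$ give $\varrho_t(x)\le\theta+(\varrho_0(x)-\theta)^{+}\le\max\{\theta,\|\varrho_0\|_{L^\infty(\mathbb{R}^d)}\}=:b$, uniformly in $t$ and $x$, whence $\varrho_t\in\tilde{\varDelta}^{+}_{b}$.

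The only genuinely delicate point is the telescoping identity and its integration, which is exactly what converts the competition term into the controllable quantity $1-\Phi_{0,t}(x)$; everything else is routine bookkeeping. To justify differentiating $\Phi_{\tau,t}(x)$ in $\tau$ one notes that $s\mapsto(a^{-}\ast\varrho_s)(x)$ is continuous (the solution is classical and convolution with $a^{-}\in L^1(\mathbb{R}^d)$ preserves continuity, cf. the second equality in (\ref{K2})), so the fundamental theorem of calculus applies. I would also emphasize that (i) enters solely through the nonnegativity-preserving convolution inequality, so the positivity $\varrho_t\in\varDelta^{+}$ supplied by Theorem \ref{R1tm} is indispensable and should be invoked explicitly.
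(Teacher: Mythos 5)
Your proof is correct, and it takes a genuinely different route from the paper's. The paper first reduces the global-boundedness claim to the second claim by enlarging $\theta$ until $\varrho_0 \in \tilde{\varDelta}^+_{\theta-\delta}$, and then proves the second claim by a barrier argument: using (\ref{34}) one gets $\varrho_t(x)<\theta$ for small $t$, and at a hypothetical first touching time $t_0$ and point $x_0$ with $\varrho_{t_0}(x_0)=\theta$ the sign computation (\ref{35}) shows $\varrho_t(x_0)$ cannot increase there, a contradiction. You instead work with the $\epsilon=0$ Duhamel form of (\ref{301u}) and, via the telescoping identity for $\partial_\tau \Phi_{\tau,t}$, convert the competition term into the quantity $1-\Phi_{0,t}(x)$, arriving at the closed-form master estimate $\varrho_t(x)\le \theta + (\varrho_0(x)-\theta)\Phi_{0,t}(x)$ from which both claims drop out at once. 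Your route buys three things: (a) it sidesteps the attainment-of-supremum issue implicit in the paper's touching-point argument (on the unbounded domain $\mathbb{R}^d$ there need not exist $x_0$ where the value $\theta$ is actually reached -- a delicacy the paper itself treats much more carefully in the proof of Theorem \ref{350tm}, via $\inf A_\varepsilon$ and $\sup B_{s,x}$, but glosses over here); (b) it is quantitative, giving in particular $\varrho_t(x)\le \theta + (\varrho_0(x)-\theta)^{+}e^{-mt}$, hence exponential approach to the level $\theta$ when $m>0$; and (c) it yields the uniform bound $b=\max\{\theta,\|\varrho_0\|_{L^\infty(\mathbb{R}^d)}\}$ for the first claim directly, with no need for the enlarge-$\theta$ reduction. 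What the paper's argument buys in exchange is brevity and a transparent dynamical picture (the vector field points inward at the level $\theta$). Your justifications of the delicate steps -- positivity of $\varrho_t$ from Theorem \ref{R1tm} to get the convolution inequality, continuity of $s\mapsto(a^-\ast\varrho_s)(x)$ via (\ref{K2}) to differentiate $\Phi_{\tau,t}$, and the sign $m\ge 0$ to discard $m\int_0^t\Phi_{\tau,t}\,d\tau$ -- are all sound.
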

\begin{remark}
Theorem \ref{op-tm} establishes the existence of solution of (\ref{E5}) on a bounded time interval only. However, by Theorems \ref{K2tm} and \ref{350tm} below, as well as by Lemma \ref{Vlm}, the solution of (\ref{E5}) with $r_0= e(\varrho_0, \cdot)$ can be extended to the whole
$\mathbb{R}_{+}$ if the conditions of Theorems \ref{K2tm} or \ref{350tm} are satisfied.
\label{Decrk}
\end{remark}
\begin{myproof}{Theorem \ref{K2tm}}
Suppose that the second part of the statement holds true. Then if $\varrho_0$ is not in $\tilde{\varDelta}^+_{\theta-\delta}$,
we can increase $\theta$ until this condition is satisfied. For this bigger value of $\theta$, (i) in (\ref{separat}) clearly holds.
Now let us prove the second part.
Since the solution $\varrho_t (x)$ satisfies \eqref{34}, we have
$$
\|\varrho_s\|_{L^\infty(\mathbb{R}^d)} \leq (\theta-\delta)\exp\left(- s(m- \langle a^{+} \rangle)\right), \quad s\in [0,t].
$$
Hence, for small $t>0$, $\varrho_t(x) < \theta$ for all $x$. Then either the latter holds for all $t>0$, or there exists $t_0>0$ such that $\varrho_{t_0} (x_0) = \theta$ for some $x_0$, and $\varrho_{t} (x)$ is strictly increasing on $[t_0,t_0+\tau)$ for a small $\tau>0$. Then
\begin{equation}
  \label{35}
\left(\frac{d \varrho_{t_0}}{d t}\right)(x_0) = - m \theta - \left((\theta a^{-} - a^{+})\ast \varrho_{t_0} \right)(x_0) < 0.
\end{equation}
Thus, $\varrho_t (x_0)$ cannot increase at such a point.
\end{myproof}
Now let us turn to  case (ii) in (\ref{separat}). Define
\begin{equation*}
 f^{\pm} (\theta) = \int_{\Upsilon_\theta} a^{\pm}(x) dx, \qquad \theta>0.
\end{equation*}
By (ii) both functions are positive and non-increasing, and
\begin{equation*}
 g(\theta) := f^{+} (\theta) - \theta f^{-} (\theta) >0 ,
\end{equation*}
for all $\theta>0$. Hence, $f^{-} (\theta) < f^{+} (\theta)/\theta$.
\begin{theorem}
  \label{350tm}
Assume that there exists $\theta>0$ such that $g(\theta)< m$. Then the solution of (\ref{K1}) with
$\varrho_0 \in \tilde{\varDelta}^+_{\theta-\delta}$ for some $\delta>0$, lies in $\tilde{\varDelta}^+_{\theta}$ for all $t>0$.
\end{theorem}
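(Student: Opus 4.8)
The plan is to establish $\theta$ as an upper barrier for the evolution, in exactly the spirit of the maximum-principle argument used for Theorem \ref{K2tm}. By Theorem \ref{R1tm} the problem (\ref{K1}) has a unique nonnegative classical solution, and the $C_{\rm b}$-framework keeps $\varrho_t$ continuous and bounded. Since $\varrho_0 \in \tilde{\varDelta}^{+}_{\theta-\delta}$ we have $\varrho_0(x)\le \theta-\delta<\theta$ for all $x$, and by continuity of $t\mapsto \varrho_t$ in the supremum norm this strict inequality persists for small $t>0$. I would then argue that it persists for all $t$: otherwise there is a first time $t_0>0$ at which the solution reaches the level $\theta$, i.e. $\varrho_{t_0}(x_0)=\theta$ at some point $x_0$ while $\varrho_{t_0}(x)\le\theta$ for every $x$, and at which $t\mapsto\varrho_t(x_0)$ attains a maximum over $[0,t_0]$, so that $\left(d\varrho_{t_0}/dt\right)(x_0)\ge 0$. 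The whole proof reduces to contradicting this by showing that the right-hand side of (\ref{K1}) is strictly negative at such a point.

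The key step is the derivative estimate, which is where case (ii) genuinely differs from case (i). Evaluating (\ref{K1}) at $(t_0,x_0)$ and using $\varrho_{t_0}(x_0)=\theta$ gives
\begin{equation*}
\left(\frac{d\varrho_{t_0}}{dt}\right)(x_0) = -m\theta - \bigl[(\theta a^{-} - a^{+})\ast\varrho_{t_0}\bigr](x_0).
\end{equation*}
Unlike in Theorem \ref{K2tm}, the kernel $\theta a^{-} - a^{+}$ is no longer nonnegative; it is negative precisely on $\Upsilon_\theta$. I would therefore split the convolution over $\Upsilon_\theta$ and its complement. On the complement the integrand $(\theta a^{-}-a^{+})\varrho_{t_0}$ is nonnegative, since $\varrho_{t_0}\ge 0$, so it only helps. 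On $\Upsilon_\theta$, where $\theta a^{-}-a^{+}<0$, I use $0\le\varrho_{t_0}\le\theta$ to obtain the pointwise lower bound $(\theta a^{-}-a^{+})\varrho_{t_0}\ge\theta(\theta a^{-}-a^{+})$, whence
\begin{equation*}
\bigl[(\theta a^{-}-a^{+})\ast\varrho_{t_0}\bigr](x_0) \ge \theta\int_{\Upsilon_\theta}\bigl(\theta a^{-}(z)-a^{+}(z)\bigr)\,dz = \theta\bigl(\theta f^{-}(\theta)-f^{+}(\theta)\bigr) = -\theta g(\theta).
\end{equation*}
Substituting this back yields $\left(d\varrho_{t_0}/dt\right)(x_0)\le\theta\bigl(g(\theta)-m\bigr)$, which is strictly negative by the hypothesis $g(\theta)<m$ together with $\theta>0$. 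This contradicts $\left(d\varrho_{t_0}/dt\right)(x_0)\ge 0$ and closes the argument; note that $m>g(\theta)>0$ is automatic, so the strict sign is legitimate.

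The main obstacle is not this computation but making the \emph{first crossing} rigorous, because $\mathbb{R}^d$ is not compact and the supremum $\sup_x\varrho_{t_0}(x)=\theta$ need not be attained at any $x_0$. I would handle this by working with $M(t):=\sup_x\varrho_t(x)$, which is continuous since $|M(t)-M(s)|\le\|\varrho_t-\varrho_s\|_{L^\infty(\mathbb{R}^d)}$, and by setting $t_0=\inf\{t:M(t)>\theta\}$, so that $M(t_0)=\theta$ and $M\le\theta$ on $[0,t_0]$. When a genuine maximizer $x_0$ exists the argument above applies verbatim; otherwise I would pass to an approximating sequence $x_n$ with $\varrho_{t_0}(x_n)\to\theta$ (or introduce a spatial penalization $\varrho_t(x)-\sigma|x|^2$ with $\sigma\downarrow0$ to force attainment) and verify that the derivative estimate survives the limit. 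This passage is justified because every bound entering the estimate depends only on the global constants $m$, $g(\theta)$ and $\theta$, and not on the location $x_0$, so the strictly negative upper bound $\theta(g(\theta)-m)$ is stable and the contradiction persists.
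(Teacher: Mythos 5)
Your core computation is correct and coincides with the paper's key estimate: at a point where $\varrho_t(x)\geq\theta$ and under a global bound on $\|\varrho_t\|_{L^\infty(\mathbb{R}^d)}$, splitting the convolution over $\Upsilon_\theta$ and its complement gives a derivative bound of the form $-m\theta+\sup\varrho_t\cdot g(\theta)<0$. The genuine gap is exactly the issue you flag and then defer: attainment of the spatial supremum. Your approximating-sequence repair does not work as described. For non-maximizing points $x_n$ with $\varrho_{t_0}(x_n)\to\theta$ the \emph{upper} bound on the derivative indeed survives the limit, but the other half of the contradiction disappears: since $\varrho_{t_0}(x_n)<\theta$, the instant $t_0$ is not a maximum of $t\mapsto\varrho_t(x_n)$ on $[0,t_0]$, so nothing forces $(d\varrho_{t_0}/dt)(x_n)\geq 0$, and a strictly negative derivative at such points contradicts nothing --- the supremum can equal $\theta$ at $t_0$ without being attained and be exceeded immediately afterwards. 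The penalization route is also not routine here: the equation is nonlocal, so at a maximum of $\varrho_t(x)-\sigma|x|^2$ the convolution terms still see the unpenalized profile, and passing $\sigma\downarrow 0$ would additionally require stability of $g(\vartheta)$ (i.e.\ of the sets $\Upsilon_\vartheta$) as $\vartheta\to\theta$; none of this is carried out.

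The paper bypasses attainment entirely, and your computation does all the analytic work needed for its argument. Fix $\varepsilon>0$ with $\varepsilon g(\theta)<\theta(m-g(\theta))$ and let $s$ be the first time $\|\varrho_t\|_{L^\infty(\mathbb{R}^d)}$ reaches the \emph{higher} level $\theta+\varepsilon$ (only continuity of the norm in $t$ is used, and $\|\varrho_t\|_{L^\infty(\mathbb{R}^d)}<\theta+\varepsilon$ on $[0,s)$). By definition of the supremum there is a point $x$ with $\varrho_s(x)>\theta$; no maximizer is needed. Since $\varrho_0(x)\leq\theta-\delta$, there is a last time $\tau<s$ with $\varrho_\tau(x)=\theta$, and $\varrho_t(x)>\theta$ on $(\tau,s)$. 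On that interval your splitting argument, now with the bound $\theta+\varepsilon$ on $\Upsilon_\theta$, gives
\[
\left(\frac{d\varrho_t}{dt}\right)(x) < -m\theta+(\theta+\varepsilon)g(\theta)<0,
\]
so $t\mapsto\varrho_t(x)$ is strictly decreasing on $(\tau,s)$ and hence stays below $\varrho_\tau(x)=\theta$, contradicting $\varrho_t(x)>\theta$ there. The structural point --- and the real difference from Theorem~\ref{K2tm}, where the kernel $\theta a^{-}-a^{+}$ is nonnegative and no global bound on $\varrho_t$ is needed --- is that in case (ii) of \eqref{separat} the pointwise estimate requires a global sup bound, and the two-level, single-trajectory argument supplies it without ever requiring the supremum to be attained. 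Restructure your proof along these lines and it closes.
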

\begin{proof}
Suppose that $\|\varrho_{t'}\|_{L^\infty(\mathbb{R}^d)}>\theta$, for some $t'>0$. Then, since $\|\varrho_t\|_{L^\infty(\mathbb{R}^d)}$ is continuous in $t$, one can choose small enough $\varepsilon>0$ with $\varepsilon<(m-g(\theta))/g(\theta)$ such that the set
$A_\varepsilon:=\{t>0 \, : \, \|\varrho_t\|_{L^\infty(\mathbb{R}^d)}=\theta+\varepsilon\}$ is nonempty. Since $\|\varrho_0\|_{L^\infty(\mathbb{R}^d)}<\theta$ and by the continuity arguments, one has $s:=\inf A_\varepsilon >0$ and $s\in A_\varepsilon$. Moreover, \begin{equation}\label{dop1}
\|\varrho_t\|_{L^\infty(\mathbb{R}^d)}<\theta+\varepsilon, \quad \text{for all}\ t\in(0,s)
\end{equation}
 (note that if $\|\varrho_{t_1}\|_{L^\infty(\mathbb{R}^d)}>\theta+\varepsilon$, for some $t_1\in(0,s)$, then there exists $t_2\in(0,t_1)\subset(0,s)$ with $\|\varrho_{t_2}\|_{L^\infty(\mathbb{R}^d)}=\theta+\varepsilon$
that contradicts the choice of $s$).

Since $\|\varrho_{s}\|_{L^\infty(\mathbb{R}^d)}=\theta+\varepsilon$, there exists $x\in\mathbb{R}^d$ such that $\varrho_s(x)\in(\theta,\theta+\varepsilon]$. For this $x$, $\varrho_0(x)<\theta$, therefore, the set $B_{s,x}:=\{t\in(0,s)\,:\, \varrho_t(x)=\theta\}$ is nonempty. By the continuity of $\varrho_t(x)$ in $t$, we have $\tau:=\sup B_{s,x}\in(0,s)$ and $\tau\in B_{s,x}$. Moreover, by a similar argument to that mentioned above,
\begin{equation}\label{dop2}
\varrho_t(x)>\theta, \quad \text{for all}\ t\in(\tau,s).
\end{equation}

Combining \eqref{dop1} and \eqref{dop2}, we get, for any $t\in(\tau,s)$ and for the chosen $x$,
\begin{eqnarray*}
\left(\frac{d \varrho_{t}}{d t}\right)(x) & < & - m  \theta  + \int_{\mathbb{R}^d} \left[ a^{+} (y) - \theta a^{-} (y)\right] \varrho_{t}(x-y) dy
\\[.2cm] & < &  - m  \theta  + (\theta+\varepsilon)\int_{\Upsilon_{\theta}} \left[ a^{+} (y) - \theta a^{-} (y)\right] dy\\[.2cm]
& = & \theta [- m + g(\theta)] +\varepsilon g(\theta)<0,
\end{eqnarray*}
by the choice of $\varepsilon$ maid above. Therefore, the function $\varrho_t(x)$ is decreasing in $t$ on $(\tau,s)$, hence, for all $t\in(\tau,s)$, $\varrho_t(x)<\varrho_\tau(x)=\theta$ that contradicts \eqref{dop2}. The contradiction shows that $\|\varrho_t\|_{L^\infty(\mathbb{R}^d)}\leq\theta$, for all $t>0$, that proves the statement.
\end{proof}
The condition crucial for the proof of Theorem \ref{350tm} is that $g(\theta) < m$ for $\theta$ such that $\varrho_0 \in \tilde{\varDelta}^+_{\theta-\delta}$. If $a^{-}$ has finite range, this holds under condition \eqref{globally} since
\[
\lim_{\theta \to +\infty} g(\theta)  = \int_{\Upsilon_{\infty}} a^{+}(x) dx, \quad \Upsilon_{\infty} := \bigcap_{\theta >0} \Upsilon_\theta   =
\{ x \in \mathbb{R}^d: a^{-} (x) = 0\}.
\]
Then, the solution $\varrho_t$ is globally bounded if
\begin{equation}
  \label{globally}
 \int_{\Upsilon_{\infty}} a^{+}(x) dx < m,
\end{equation}
which points to the role of the competition in the considered model -- if $a^{-} \equiv 0$, then the left-hand side of (\ref{globally}) is
just $\langle a^{+} \rangle$ and the condition in (\ref{globally}) turns into that of the sub-criticality in the contact model.\cite{KKP}
To illustrate this conclusion, let us consider the following example. For $r>0$, set $B_r = \{x\in \mathbb{R}^d: |x| \leq r\}$, and let
$\mathbb{I}_r$ and $|B_r|$ stand for the indicator and the Euclidean volume of $B_r$, respectively.
For
\begin{equation}
  \label{kernels}
a^{+} = \alpha \mathbb{I}_R, \quad a^{-} = \beta \mathbb{I}_r, \qquad R> r >0, \quad \alpha, \beta >0,
\end{equation}
we have
\[
g(\theta) = \left\{ \begin{array}{ll} \alpha |B_R| - \theta \beta |B_r|, \quad &{\rm if} \ \  \theta< \alpha /\beta;\\[.2cm]
 \alpha (|B_R| -  |B_r|), \quad &{\rm otherwise}
\end{array}\right.
\]
Hence, the condition of Theorem \ref{350tm} is satisfied if
\begin{equation}
\label{alpha}
\alpha (|B_R| -  |B_r|) < m.
\end{equation}
Case (ii) of (\ref{separat}) contains a subcase where one can get
more than the mere global boundedness established in Theorem
\ref{350tm}. From (\ref{301u}) it clearly follows that the solution
as in Theorem \ref{R1tm} is independent of $x$, i.e., is translation
invariant, if so is $\varrho_0$. This translation invariant solution
can be obtained explicitly. By setting $\varrho_t (x)\equiv \psi_t$ we obtain from (\ref{K1})  the following
\begin{equation}
\label{36}
\frac{d}{dt} \psi_t = (\langle a^{+} \rangle -m) \psi_t - \langle a^{-} \rangle \psi_t^2, \qquad \psi_t|_{t=0} = \psi_0,
\end{equation}
which is a Bernoulli equation. For $m > \langle a^{+} \rangle$, its solution decays to zero exponentially as $t \to +\infty$. For
$m = \langle a^{+} \rangle$, the solution is $\psi_t = \psi_0/(1 +  \langle a^{-} \rangle \psi_0 t)$, and hence also decays to zero as $t \to +\infty$.
For $m < \langle a^{+} \rangle$, we set
\begin{equation}
\label{37}
  q = \frac{ \langle a^{+} \rangle - m}{\langle a^{-} \rangle}.
\end{equation}
In this case the solution of (\ref{36}) has the form
\begin{equation}
  \label{38}
  \psi_t = \frac{\psi_0 q}{\psi_0 + (q-\psi_0) \exp(- q \langle a^{-} \rangle t)},
\end{equation}
which, in particular, means that $\psi_t \to q$ as $t \to +\infty$. Note that $\psi_t = q$ for all $t>0$ whenever $\psi_0 =q$.
\begin{theorem}
 \label{K4tm}
Suppose that $q>0$ and there exists $\varkappa^+>q$ such that $a^{+} (x) \geq \varkappa^+ a^{-} (x)$ for almost all $x\in \mathbb{R}^d$.
Let also the initial condition $\varrho_0 \in C_{\rm b}(\mathbb{R}^d)$ in (\ref{K1}) obey
\begin{equation}
  \label{40}
  0< \varkappa^{-} < \varrho_0 (x) < \varkappa^{+} < +\infty ,
\end{equation}
for some $\varkappa^{-}\in(0,q)$ and all $x\in \mathbb{R}^d$. Then, for each $x\in \mathbb{R}^d$ and $t>0$, the solution as in Theorem \ref{R1tm} obeys the bounds $\psi_t^{-} < \varrho_t(x) < \psi_t^{+}$, where $\psi_t^{\pm}$ are given in (\ref{38}) with $\psi_0=\varkappa^{\pm}$. Hence $\varrho_t (x) \to q$ in $C_{\rm b}(\mathbb{R}^d)$ as $t\to +\infty$.
\end{theorem}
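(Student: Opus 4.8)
The plan is to squeeze $\varrho_t$ between the two spatially homogeneous solutions $\psi^{+}_t$ and $\psi^{-}_t$ of the logistic equation \eqref{36}, which are available explicitly from \eqref{38}. Since $\varkappa^{-}<q<\varkappa^{+}$, the function $\psi^{+}_t$ decreases monotonically from $\varkappa^{+}$ to $q$ and $\psi^{-}_t$ increases from $\varkappa^{-}$ to $q$; in particular $\psi^{-}_t<q<\psi^{+}_t\leq \varkappa^{+}$ for all $t\geq 0$. Once the two-sided bound $\psi^{-}_t<\varrho_t(x)<\psi^{+}_t$ is established, letting $t\to+\infty$ in \eqref{38} squeezes $\varrho_t(x)$ to $q$ uniformly in $x$, which gives the convergence in $C_{\rm b}(\mathbb{R}^d)$ asserted in the theorem.

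First I would derive the equation satisfied by the differences. Subtracting \eqref{36} from \eqref{K1} and regrouping the quadratic competition term (adding and subtracting $\langle a^{-}\rangle\psi_t$ inside the convolution), one finds for each comparison function $\psi\in\{\psi^{+},\psi^{-}\}$ and $W_t:=\varrho_t-\psi_t$ the identity
\[
\frac{d}{dt}W_t(x)=-\bigl[m+(a^{-}\ast\varrho_t)(x)\bigr]W_t(x)+\int_{\mathbb{R}^d}\bigl[a^{+}(y)-\psi_t\,a^{-}(y)\bigr]W_t(x-y)\,dy.
\]
The decisive structural point is the sign of the kernel $a^{+}(y)-\psi_t\,a^{-}(y)$: because $\psi^{\pm}_t\leq\varkappa^{+}$ and, by hypothesis, $a^{+}\geq\varkappa^{+}a^{-}\geq\psi^{\pm}_t\,a^{-}$ (using $a^{-}\geq 0$), this kernel is nonnegative for \emph{both} choices of $\psi$. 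This is precisely where the assumption $a^{+}\geq\varkappa^{+}a^{-}$ with $\varkappa^{+}>q$ is used, and it is the reason the comparison functions sweep the whole interval towards $q$ rather than being controlled only at $t=0$.

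Next I would integrate this linear equation by Duhamel, treating the convolution as a source. With the positive integrating factor $\mu_t(x)=\exp\bigl(\int_0^t[m+(a^{-}\ast\varrho_s)(x)]\,ds\bigr)\geq e^{mt}$ one gets $W_t(x)=\mu_t(x)^{-1}W_0(x)+\int_0^t(\mu_s(x)/\mu_t(x))\,g_s(x)\,ds$, where $g_s$ denotes the convolution term and $\mu_s/\mu_t\leq e^{-m(t-s)}\leq 1$. For the upper bound I set $P_t:=\sup_x\max(\varrho_t(x)-\psi^{+}_t,0)$, note $P_0=0$ since $\varrho_0<\varkappa^{+}=\psi^{+}_0$, and use the kernel's nonnegativity to estimate $g_s(x)\leq P_s\bigl[\langle a^{+}\rangle-\psi^{+}_s\langle a^{-}\rangle\bigr]$, the bracket being bounded on $[0,T]$. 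Dropping the nonpositive term $\mu_t^{-1}W_0$ and taking the positive part and supremum over $x$ yields a scalar Volterra inequality $P_t\leq\mathrm{const}\cdot\int_0^t P_s\,ds$, whence $P_t\equiv0$ by Gronwall, i.e.\ $\varrho_t\leq\psi^{+}_t$. The symmetric quantity $N_t:=\sup_x\max(\psi^{-}_t-\varrho_t(x),0)$, with $N_0=0$ because $\varrho_0>\varkappa^{-}=\psi^{-}_0$, gives $\varrho_t\geq\psi^{-}_t$ in the same manner. Strictness then follows for free from the same formula: once $\varrho_t\leq\psi^{+}_t$ everywhere the source satisfies $g_s\leq 0$, so $\varrho_t(x)-\psi^{+}_t\leq\mu_t(x)^{-1}(\varrho_0(x)-\varkappa^{+})<0$, and likewise $\varrho_t(x)-\psi^{-}_t\geq\mu_t(x)^{-1}(\varrho_0(x)-\varkappa^{-})>0$, by \eqref{40}.

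The main obstacle, and the reason a direct adaptation of the contradiction argument of Theorem \ref{350tm} stalls, is that here the comparison functions $\psi^{\pm}_t$ are \emph{exact} solutions of the homogeneous equation, so there is no strict margin analogous to $g(\theta)<m$; a naive first-crossing computation only delivers a non-strict derivative inequality, and the nonlinear competition term $(a^{-}\ast\varrho_t)\varrho_t$ carries no pointwise sign. Passing to the difference equation is exactly what resolves this: it isolates the competition term as the manifestly dissipative factor $-[m+(a^{-}\ast\varrho_t)]W_t$, while the remaining convolution acquires a definite sign from $a^{+}\geq\varkappa^{+}a^{-}$. The Duhamel-plus-Gronwall device then closes the estimate using only the global inequalities $W_s\leq P_s$ and $W_s\geq -N_s$, so that the possible non-attainment of the supremum on the non-compact space $\mathbb{R}^d$ plays no role.
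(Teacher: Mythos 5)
Your proposal is correct, but it is organized differently from the paper's own argument, and the difference is worth recording. The paper proves the bounds by a short\-/time bootstrap: since \eqref{40} gives strict inequalities at $t=0$, continuity yields $\psi^{-}_t<\varrho_t(x)<\psi^{+}_t$ on some $(0,\tau)$; on that interval the paper writes the equation with the kernel $a_t(x,y)=a^{+}(x-y)-\varrho_t(x)\,a^{-}(x-y)$, whose positivity uses the \emph{assumed} bound $\varrho_t(x)<\varkappa^{+}$, derives the differential inequality $\frac{d}{dt}u_t\geq -\langle a^{+}\rangle u_t$ for $u_t=\varrho_t-\psi^{-}_t$, concludes $u_t(x)\geq u_0(x)e^{-t\langle a^{+}\rangle}>0$, and then continues the strict bound to all $t$ by repeating the argument interval by interval. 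You instead derive an \emph{exact} linear nonlocal equation for the difference $W_t=\varrho_t-\psi_t$, whose kernel $a^{+}(y)-\psi_t\,a^{-}(y)$ is nonnegative \emph{a priori}, because $\psi^{\pm}_t\leq\varkappa^{+}$ is a property of the explicit logistic solutions \eqref{38} and not of the unknown $\varrho_t$; this removes the circularity, so Duhamel plus a global Gronwall estimate on $P_t=\sup_x\max(W_t(x),0)$ gives the non-strict bounds in one shot on any $[0,T]$, with no continuation step, and strictness then drops out of the Duhamel formula since the source term has a sign. Both arguments rest on the same two pillars --- comparison with the homogeneous solutions \eqref{38} and positivity of $a^{+}-c\,a^{-}$ for $c\leq\varkappa^{+}$ --- but your version buys a cleaner logical structure (no ``repeat with the same $\tau$'' iteration, and indifference to whether suprema over the non-compact $\mathbb{R}^d$ are attained), while the paper's version is shorter and works with elementary pointwise differential inequalities. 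One presentational caveat: your identity for $\frac{d}{dt}W_t$ is obtained by adding and subtracting $(a^{-}\ast\varrho_t)\,\psi_t$ (not ``$\langle a^{-}\rangle\psi_t$ inside the convolution''); the stated identity itself is correct, but you should display that regrouping explicitly, since the whole proof hinges on it.
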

The condition in Theorem \ref{K4tm} can be formulated as $\Upsilon_\theta = \mathbb{R}^d$ for all $\theta < q$. Its another form is
\begin{equation}
  \label{39}
 \frac{a^{+} (x)}{\langle a^{+} \rangle} \geq \left(1 - \frac{m}{\langle a^{+} \rangle} \right) \frac{a^{-} (x)}{\langle a^{-} \rangle},
\end{equation}
from which we see that the scale of the competition is irrelevant for the result stated in Theorem \ref{K4tm} to hold.
If $a^{+} (x)= \theta a^{-}(x)$, for some $\theta>0$ and almost all $x$, then (\ref{39}) holds for all $m\in [0, \langle a^{+} \rangle)$. If the competition has the range shorter than that of dispersal,  the mentioned homogenization
occurs at nonzero mortality $m$. For the example from (\ref{kernels}),  condition  (\ref{39}) holds if
\[
1 - \frac{m}{\langle a_{+} \rangle} \leq \left(\frac{r}{R} \right)^d,
\]
which is exactly the one given in (\ref{alpha}).
\begin{myproof}{Theorem \ref{K4tm}}
As $\psi_t^{-} < \varrho_t(x) < \psi_t^{+}$ clearly holds for $t=0$, by the continuity of the three functions of $t$ it holds for $t\in (0,\tau)$, for some $\tau>0$.
Write
\begin{gather*}
  \frac{d}{dt} \varrho_t (x) = - m \varrho_t (x)  + \int_{\mathbb{R}^d} a_t(x,y) \varrho_t (y) d y\\[.2cm]
  a_t (x, y) := a^{+} (x-y) - \varrho_t(x) a^{-} (x-y).
\end{gather*}
By the assumption $\varrho_t(x) < \psi^{+}_t < \varkappa^{+}$, $t \in (0,\tau)$, and hence $a_t$ is a positive kernel.
Also  $\varrho_t(x)> \psi^{-}_t$, $t\in (0,\tau)$, which yields
\begin{gather*}
   \frac{d}{dt} \varrho_t (x) \geq - m \varrho_t(x) - \langle a^{-} \rangle \psi^{-}_t \varrho_t (x) +   \langle a^{+} \rangle\psi^{-}_t.
\end{gather*}
Introduce $u_t (x) := \varrho_t (x) - \psi^{-}_t$ and obtain
\begin{gather*}
   \frac{d}{dt} u_t(x) \geq  - m u_t(x) - \langle a^{-} \rangle \psi^{-}_t u_t (x) \\[.2cm]
   = - (m + \langle a^{-} \rangle q ) u_t (x) + \langle a^{-} \rangle (q - \psi^{-}_t)u_t (x) \geq - \langle a^{+} \rangle  u_t (x),
\end{gather*}
where we have taken into account that $\psi^{-}_t < q$ for all $t>0$. The latter yields
\[
\varrho_t (x) - \psi^{-}_t \geq (\varrho_0 (x) - \varkappa) \exp( - t \langle a^{+} \rangle) , \qquad t \in (0, \tau).
\]
Hence, the estimate $\varrho_t (x) - \psi^{-}_t >0$ can be continued to arbitrary value of $t$ by repeating the above arguments with the same $\tau$. In a similar way, we obtain
\[
\psi^{+}_t - \varrho_t (x) \geq (\psi^{+}_0 - \varrho_0 (x))  \exp( - t [m + \langle a^{} \rangle\theta]),
\]
which completes the proof.
\end{myproof}

\subsection{Conclusion remarks}

\label{Conclsec}

The microscopic dynamics of the model considered here were first studied in \cite{Dima}, where weak$^*$ solutions of the problem in
(\ref{R4}) were shown to exist on $[0,+\infty)$ under the condition  which in our notations is  (\ref{z14}) plus also
$m > 16 \theta \langle a^{-} \rangle  +  4 \langle a^{+} \rangle$, see Theorems 4.6 and 5.1 on pages 309, 310. Afterwards, the results of
\cite{Dima} were used in \cite{FKK-MFAT} to derive and resume studying of the kinetic equation in (\ref{V-eqn-gen}).

An analog of (\ref{K1}) was non-rigorously deduced in \cite{Durrett} from a microscopic model on $\mathbb{Z}^d$.
Then this equation with $a^{+} = a^{-}$ was studied in \cite{Perthame}.
The case of equal kernels is covered by both Theorems \ref{K2tm} and \ref{K4tm}. According
to Theorem \ref{R1tm}, with no assumption on the parameters of the model we have the existence of the global evolution of $\varrho_t$,
which is in accord with Theorem \ref{op-tm}. A possible interpretation is that the mesoscopic description based on the
scaling applied here is insensitive to
the relationship between $a^{+}$ and $a^{-}$. This relationship is important
if one wants to get more detailed information, which is contained in Theorems \ref{K2tm} and \ref{K4tm}.

\paragraph{Acknowledgement}

This work was
financially supported by the DFG through SFB 701: ``Spektrale
Strukturen und Topologische Methoden in der Mathematik" and by the European Commission under the project
STREVCOMS PIRSES-2013-612669. The support from the ZiF
Research Group "Stochastic Dynamics: Mathematical Theory and
Applications" (Universit\"at Bielefeld) was also very helpful.


\end{document}